\setlist[itemize]{leftmargin=*}
\setlist[enumerate]{leftmargin=*}
\newtheorem{proposition}{Proposition}
\newtheorem{theorem}{Theorem}
\newtheorem{definition}{Definition}
\newtheorem{lemma}{Lemma}
\newtheorem{claim}{Claim}
\newtheorem{corol}[theorem]{Corollary}
\newcommand{\namedref}[2]{\hyperref[#2]{#1~\ref*{#2}}}
\newcommand{\Sectionref}[1]{\namedref{Section}{sec:#1}}
\newcommand{\Appendixref}[1]{\namedref{Appendix}{app:#1}}
\newcommand{\Theoremref}[1]{\namedref{Theorem}{thm:#1}}
\newcommand{\Corollaryref}[1]{\namedref{Corollary}{cor:#1}}
\newcommand{\Propositionref}[1]{\namedref{Proposition}{prop:#1}}
\newcommand{\Lemmaref}[1]{\namedref{Lemma}{lem:#1}}
\newcommand{\Claimref}[1]{\namedref{Claim}{clm:#1}}
\newcommand{\Footnoteref}[1]{\namedref{Footnote}{foot:#1}}
\newcommand{\Pageref}[1]{\hyperref[#1]{page~\pageref*{#1}}}
\newcommand{\eps}{\ensuremath{\epsilon}\xspace}
\newcommand{\Real}{\ensuremath{{\mathbb R}}\xspace}
\DeclareMathOperator{\SDiff}{SD}
\definecolor{darkred}{rgb}{0.5, 0, 0} 
\definecolor{darkblue}{rgb}{0,0,0.5} 
\newcommand{\pr}{\operatorname{Pr}}
\newcommand{\prob}{\mathbf{p}}
\newcommand{\outputs}{\ensuremath{\mapsto}\xspace}
\newcommand{\Pirate}[2][]{\ensuremath{{CC}^{\scriptscriptstyle #1}_{\scriptscriptstyle #2}(\Pi)}\xspace}
\newcommand{\Piratex}{\ensuremath{{CC}(\Pi)}\xspace}
\newcommand{\CC}[3][]{\ensuremath{CC^{#1}_{\scriptscriptstyle #2}\left(#3\right)}\xspace}
\newcommand{\PiIC}[2][]{\ensuremath{{IC}^{\scriptscriptstyle #1}_{\scriptscriptstyle #2}(\Pi)}\xspace}
\newcommand{\IC}[2]{\ensuremath{IC_{\scriptscriptstyle #1}\left(#2\right)}\xspace}
\newcommand{\Outerreg}{\ensuremath{\mathfrak{R}}\xspace}
\newcommand{\ICostreg}{\ensuremath{\mathfrak{I}}\xspace}
\newcommand{\Capacityreg}{\ensuremath{\mathfrak{C}}\xspace}
\newcommand{\Outer}[2]{\ensuremath{\Outerreg\left(#2:#1\right)}\xspace}
\newcommand{\Outeradj}[2]{\ensuremath{\widetilde{\Outerreg}\left(#2:#1\right)}\xspace}
\newcommand{\ICost}[2]{\ensuremath{\ICostreg\left(#2:#1\right)}\xspace}
\newcommand{\Capacity}[3][]{\ensuremath{{\Capacityreg}_{#1}\left(#3:#2\right)}\xspace}
\newcommand{\RT}{\ensuremath{\mathfrak{T}}\xspace}
\newcommand{\Rtens}[2]{\ensuremath{\RT({#1};{#2})}\xspace}
\newcommand{\CIwyn}{\ensuremath{CI_\text{\sf Wyn}}\xspace}
\newcommand{\CIgk}{\ensuremath{CI_\text{GK}}\xspace}
\newcommand{\Twyn}{\ensuremath{T_\text{\sf Wyn}}\xspace}
\newcommand{\Tgk}{\ensuremath{T_\text{GK}}\xspace}
\newcommand{\Discrepancy}[2]{\ensuremath{\mathrm{Disc}_{\scriptscriptstyle #2}({#1})}\xspace}
\renewcommand{\paragraph}[1]{\smallskip\noindent{\bf #1}~}
\newcommand{\disc}{\ensuremath{\Delta}\xspace}
\newcommand{\cA}{\ensuremath{\mathcal{A}}\xspace}
\newcommand{\cB}{\ensuremath{\mathcal{B}}\xspace}
\newcommand{\cL}{\ensuremath{\mathcal{L}}\xspace}
\newcommand{\cM}{\ensuremath{\mathcal{M}}\xspace}
\newcommand{\cQ}{\ensuremath{\mathcal{Q}}\xspace}
\newcommand{\cR}{\ensuremath{\mathcal{R}}\xspace}
\newcommand{\cS}{\ensuremath{\mathcal{S}}\xspace}
\newcommand{\cT}{\ensuremath{\mathcal{T}}\xspace}
\newcommand{\cX}{\ensuremath{\mathcal{X}}\xspace}
\newcommand{\cY}{\ensuremath{\mathcal{Y}}\xspace}
\newcommand{\Adv}[1]{\ensuremath{D(#1)}\xspace}
\newcommand{\RR}{\ensuremath{\hat{R}}\xspace}
\newcommand{\QQ}{\ensuremath{Q'}\xspace}
\begin{document}
\title{Tension Bounds for Information Complexity}
\author{Manoj M. Prabhakaran%
        \thanks{%
			Department of Computer Science,
					   University of Illinois, Urbana-Champaign.
					   {\tt mmp@illinois.edu}.}
		\and  
		Vinod M. Prabhakaran%
		\thanks{%
			School of Technology and Computer Science,
				   Tata Institute of Fundamental Research, Mumbai, India.
				   {\tt vinodmp@tifr.res.in}}
}

\maketitle

\begin{abstract}
The main contribution of this work is to relate information complexity to
``tension'' \cite{PrabhakaranPr14} -- an information-theoretic quantity
defined with no reference to protocols -- and to illustrate that it allows
deriving strong lower-bounds on information complexity.  In particular, we
use a very special case of this connection to give a quantitatively tighter
connection between information complexity and discrepancy than the one in
\cite{BravermanWe12} (albeit, restricted to independent inputs). Further, as
tension is in fact a multi-dimensional notion, it enables us to bound the
2-dimensional region that represents the {\em trade-off} between the amounts
of communication in the two directions, in a 2-party protocol.

This work is also intended to highlight tension as a fundamental measure of
correlation between a pair of random variables, with rich connections to a
variety of questions in computer science and information theory.
\end{abstract}

\thispagestyle{empty}
\newpage

\setcounter{page}{1}

\pagestyle{plain}

\section{Introduction}
\label{sec:intro}

Communication complexity, since the seminal work of Yao~\cite{Yao79}, has
been a central question in theoretical computer science. Many of the recent
advances in this area have centred around the notion of information
complexity, which measures the {\em amount of information} about the inputs
-- rather than the {\em number of bits} -- that should be present in a
protocol's transcript, if it should compute a function (somewhat) correctly.

The main contribution of this work is to relate information complexity to
``tension'' \cite{PrabhakaranPr14} -- an information-theoretic quantity
defined with no reference to protocols -- and to illustrate that it allows
deriving strong bounds on information complexity.  In particular, we use a
very special case of this connection to give a quantitatively tighter
connection between information complexity and discrepancy than the one in
\cite{BravermanWe12} (albeit, restricted to independent inputs). Further, as
tension is in fact a multi-dimensional notion, it enables us to bound the
2-dimensional region that represents the {\em trade-off} between the amounts
of communication in the two directions, in a 2-party protocol.

This work is also intended to highlight tension as a fundamental measure of
correlation between a pair of random variables, with rich connections to a
variety of questions in computer science and information theory. Tension is
intimately related to the notion of {\em common information} developed in
highly influential works in the information theory literature from the
70's \cite{GacsKo73,Wyner75}. Tension has proven useful in deriving
state-of-the-art bounds on ``cryptographic complexity'' (i.e., number of
instances of, say, oblivious transfer needed per instance of securely
computing a function) \cite{PrabhakaranPr14} and communication complexity of
information-theoretically secure multiparty computation \cite{DataPrPr14}.
However, currently we have few tools to compute (or bound) tension. We leave
it as an important problem to understand tension in general as well as for
specific random variables.

\paragraph{What is Tension?} Tension of a pair of correlated random
variables $(A;B)$ captures ``non-trivial'' correlation between them: i.e.,
the extent to which correlation {\em cannot} be captured by a common random
variable that can be associated with both $A$ and $B$.  The question of how
well correlation {\em can } be captured by a random variable is formulated
in terms of ``common information.'' Two different notions of common
information were developed in the 70's, $\CIgk(A;B)$ by G\'acs-K\"orner
\cite{GacsKo73}, and $\CIwyn(A;B)$ by Wyner \cite{Wyner75}, with operational
meanings related to certain natural information theoretic problems. (See \Appendixref{tension}
for more details.) One can define
corresponding notions of tension as the gap between mutual information
(which accounts for all the correlation, but may not correspond to a common
random variable) and common information.  More precisely, one can define the
non-negative tension quantities $\Tgk(A;B)=I(A;B)-\CIgk(A;B)$ and
$\Twyn(A;B)=\CIwyn(A;B)-I(A;B)$.  These notions of tension were identified in
\cite{PrabhakaranPr14} as special cases of a unified 3-dimensional notion of {\em
tension region}. 

In \cite{PrabhakaranPr14}, an operational meaning was attached
to tension region in terms of a communication problem, and also it was shown
that a {\em secure} 2-party protocol for sampling correlated random
variables with ``high tension''%
\footnote{Informally, the farther the tension region is from the origin, the
higher the tension, along different dimensions.}
will need a large number of instances of oblivious transfer.  In
\Appendixref{tension}, we summarize some of the basic properties of the
tension region, as developed in \cite{PrabhakaranPr14}. 

We lower bound the information complexity of a function $f$ in terms of how
different the tension regions of $(X;Y)$ and $(X,Z; Y,Z)$ are, where
$Z=f(X,Y)$ (or rather, $\pr[Z=f(X,Y)]\ge\frac12+\eps$).  In particular, when
the inputs $(X;Y)$ are independent of each other (so that their tension is
zero, and hence contains the origin), the {\em information complexity
region} is shown to lie inside the tension region of $(X,Z;Y,Z)$. (An
information complexity region farther from the origin corresponds to a
higher lower-bound on information complexity.)  Note that even though $Z$
may be a single bit, the difference between the tension regions of $(X;Y)$
and $(XZ;YZ)$ could be quite large -- as we illustrate by the connection
with discrepancy.

\subsection{Overview of Results and Techniques}

Our contributions are in two parts: 
\begin{enumerate}
\item We show that information complexity can be lower-bounded using tension
-- a fundamental quantity defined with no reference to protocols.
\item We illustrate the potential of this approach for yielding strong
lower-bounds, by obtaining an improved lower-bound on information complexity
in terms of discrepancy.  
\end{enumerate}
Below, we shall elaborate on these contributions further. We point out that
our model and results are, in some ways, more general than prior work:
\begin{itemize}
\item In developing the connection between information complexity and
tension (as well as between information complexity and communication
complexity), we work with a ``bigger picture'' that considers 2-dimensional
notions of these quantities. We remark that even if we are interested only in bounding
communication complexity and information complexity (corresponding to
1-dimensional regions), using bounds in terms of the 2-dimensional region
can yield potentially stronger lower-bounds.
\item Our results hold for randomized functions, with asymmetric outputs.
\item A minor difference is that in our communication model, we allow for
the possibility that the transcript (i.e., the concatenation of all the
messages sent during the protocol in either direction) may not be
``parsable'' into individual messages by an outsider, though each party,
with its input can parse it. (See \Footnoteref{model}.)
\end{itemize}

We propose, as a direction for further study, that various results on
information complexity which led to advances in communication complexity can
be rederived for tension, thereby providing alternate (and hopefully
simpler) proofs to these results.  Also, we leave it as an open problem to
exploit the full power of the tension bounds: currently, there are few
techniques to map out the full 3-dimensional tension region of a pair of
random variables.

\subsubsection*{Tension, Information Complexity and Communication Complexity}
The basic idea behind lower-bounding information complexity by tension is,
in fact, easy to see. Consider a protocol in which, for simplicity, the two
parties are given independent inputs $X,Y$, exchange messages to generate a
transcript $M$, and produces a common output $Z$. Since $X,Y$ were
independent of each other, we know that $(X,Z)$ and $(Y,Z)$ should continue
to be independent conditioned on the transcript, $M$; i.e., $(X,Z)-M-(Y,Z)$.
One can see that the information cost of this protocol $I(X;M|Y)+I(Y;M|X)$
can be lower bounded by $I(XZ;M|YZ)+I(YZ;M|XZ)$, which in turn can be lower
bounded by $\inf_{Q:XZ-Q-YZ} I(XZ;Q|YZ)+I(YZ;Q|XZ)$ (i.e., without requiring
that $Q$ is the transcript of a protocol that outputs $Z$, but only that
$XZ-Q-YZ$).  The latter quantity is exactly the Wyner-Tension,
$\Twyn(XZ;YZ)$.  When $(X,Y)$ are not independent, this lower-bound changes
to $\Twyn(XZ;YZ)-\Twyn(X;Y)$.  Jumping ahead, we mention that we can extend
this basic lower-bound to a more general one, where we also consider $Q$
such that the condition $XZ-Q-YZ$ is replaced by $I(XZ;YZ|Q)\le c$ for $c\ge
0$ (this is of interest only when $X,Y$ are correlated). 

We derive our lower-bounds in terms of 2-dimensional regions, which can
potentially yield stronger lower bounds than considering the two points
$\Twyn(XZ;YZ)$ and $\Twyn(X;Y)$ on the one-dimensional line.
The general relation between communication complexity and information
complexity, and that between information complexity and tension
(\Theoremref{ic-cc} and \Theoremref{tens-ic}) can be summarized as
\[ \Capacityreg \subseteq \ICostreg \subseteq \Outerreg, \]
where \Capacityreg denotes the set of communication cost pairs (number of
bits from Alice to Bob, and vice-versa) achievable by protocols computing a
possibly randomized function $f$, \ICostreg denotes the information cost pairs (information
communicated by Alice to Bob about her input, and vice versa) achievable by
such protocols, and \Outerreg, as described below, denotes a 2-dimensional restriction of the
3-dimensional ``tension region''  that was introduced in
\cite{PrabhakaranPr14}. Here, all three regions are defined to be ``upward closed''
subsets of $\Real_+^2$: i.e., if $(x,y)$ is in the set and then so is
$(x',y')$ for all $x'\ge x$ and $y'\ge y$.

Before fully describing \Outerreg, for simplicity, consider  the case of
independent $X,Y$. In this case, \Outerreg is given by
\[ \RT_0(XZ;YZ) = \{ (r_1,r_2)\in\Real_+^2: \exists Q  \text{ s.t. } 
	XZ-Q-YZ \text{ and } I(XZ;Q|YZ) \le r_1,\;I(YZ;Q|XZ) \le r_2 \}. \]
This is a convex, upward-closed region, typically bounded away from the
origin. In the more general case, when $X,Y$ are not independent,
\Outerreg is somewhat more complex. In particular, it is contained in the
region
\[ \RT_0(XZ;YZ) - \RT_0(X;Y) = \{ (r_1,r_2)\in\Real_+^2: (r_1,r_2) +
\RT_0(X;Y) \subseteq \RT_0(XZ;YZ) \}. \]
Typically, we expect the region $\RT_0(XZ;YZ)$ to be much further away from
the origin than $\RT_0(X;Y)$ (i.e., $(XZ;YZ)$ has much higher tension than
$(X;Y)$).  The region $\RT_0(XZ;YZ) - \RT_0(X;Y)$  (or rather, the lower
boundary of it) captures the least amount by which $\RT_0(X;Y)$ should be
pushed away from the origin so that it moves completely inside
$\RT_0(XZ;YZ)$.
The bound $\Twyn(XZ;YZ)-\Twyn(X;Y)$ mentioned
earlier, can be obtained as 
\[
\inf_{(a,b)\in\RT_0(XZ;YZ)}(a+b) - \inf_{(a,b)\in\RT_0(X;Y)}(a+b)
\le \inf_{(a,b)\in \RT_0(XZ;YZ) - \RT_0(X;Y)} (a+b).
\]
Here we point out that the inequality above could be strict, in which case
settling for a 1-dimensional version would give a weaker bound than what is
implied by the 2-dimensional version.

The full definition of $\Outerreg$ is $\cap_{c\ge 0} \RT_c(XZ;YZ) -
\RT_c(X;Y)$, where in $\RT_c(XZ;YZ)$ we do not restrict to $Q$ such that
$XZ-Q-YZ$; instead we require only that $I(XZ;YZ|Q)\le c$. In showing that
\Outerreg gives a valid outer-bound on \ICostreg, we rely on a certain
``monotonicity'' property of the 3-dimensional tension region of the views
of the parties in a protocol: the tension region can only extend closer to
the origin as the protocol progresses.%
\footnote{A more general monotonicity property holds, allowing the parties
to not just exchange messages, but also to ``securely'' delete parts of
their views. This was shown in \cite{PrabhakaranPr14} for all of the tension
region, including \Twyn; a similar result appeared for \Tgk and two other
points in the tension region in an earlier work of Wolf and Wullschleger
\cite{WolfWu05}.}

While quite general in its form, we leave it as an open problem to exploit
the full power of this connection, since understanding the full
3-dimensional tension region is an outstanding challenge.

\paragraph{Information Complexity vs.\ Communication Complexity.}
As mentioned above, the connection between information complexity and
communication complexity is well-known. We extend this relation to the
2-dimensional regions $\Capacityreg$ and $\ICostreg$. Note that \Capacityreg
corresponds to {\em average} communication-complexity. Hence $\Capacityreg
\subseteq \ICostreg$ directly yields a lower bounds not just on worst-case
communication complexity (as it is often presented in the literature), but
in fact on average communication complexity as well.%
\footnote{In fact, we observe that the inequality $\PiIC\mu \le \Piratex$
\cite{BravermanRa11} used to relate information cost and worst-case
communication cost of a protocol can in fact be strengthened to $\PiIC\mu
\le \Pirate\mu \le \Piratex$, for any distribution $\mu$ over the inputs.
(See \Lemmaref{ic-cc}.)}
This allows one to translate lower-bounds on information complexity of
protocols of a certain error rate to lower-bounds on average communication
complexity for the same error rate.

\subsection*{Discrepancy vs.\ Tension}
Consider $X,Y$ being $n$-bit long strings, and $Z$ being a single bit with
$\Pr[Z=f(X,Y)]\ge\frac12+\eps$, where $f$ is, say, the inner-product
over $GF(2)$. When $X,Y$ are independent, $\Twyn(X;Y)=0$.
One would wonder if adding a single bit to the random variables can
change their tension by more than a constant amount. But as it turns out, 
the correlation between $XZ,YZ$ as captured by \Twyn can be $\Omega(n)$
bits! For this, we rely on the function $f$ having an exponentially small ``discrepancy,''
a combinatorial measure of complexity of a function.

Indeed, in \Sectionref{disc-tens} we show that
the Wyner-Tension $\Twyn(XZ;YZ)$, where $X,Y$ are independent, and
$\Pr[Z=f(X,Y)]\ge\frac12+\eps$, can be lower-bounded as
$\Omega(\eps\log\frac\eps\disc)$ if the discrepancy of $f$ (w.r.t.\ the
distribution of $(X,Y)$) is upper-bounded by $\disc$. This compares
favorably with a similar bound in \cite{BravermanWe12}, of the form
$\Omega(\eps^2\log\frac\eps\disc)$ (though, as mentioned above, the bound in
\cite{BravermanWe12} applies even if $X,Y$ are not independent).

% This is perhaps the more technically involved part of this paper.
% However, we believe it is simpler and more direct (and quantitatively,
% tighter) than the one in \cite{BravermanWe12}.  

To lower-bound $\Twyn(XZ;YZ)$ it turns out to be enough to lower-bound
$I(XY;Q)$ such that $X-Q-Y$ and given $Q$, $Z$ is determined (i.e.,
$H(Z|Q)=0$). The high-level intuition is to analyze the advantage $Z$ has
(i.e., $\pr[Z=f(X,Y)]-\frac12$) as contributed by different values of $Q$.
For starters, suppose the input distribution is uniform and further, for
each value $q$ for $Q$, the conditional distribution $p_{XY|Q=q}$ is also
{\em uniform over a rectangle}.  Then, for $q$ such that this rectangle is
large, its contribution to the advantage will be small, because otherwise it
will result in a large discrepancy (recall that $Z$ must take a single value
conditioned on $Q=q$). Thus, to achieve a large advantage when the
discrepancy is small, most of the mass on $Q$ should correspond to $q$ such
that $p_{XY|Q=q}$ is uniform over a ``small'' rectangle. Intuitively, this
should imply a large value for $I(XY;Q)$.

This idea runs into several complications. Mainly, $p_{XY|Q=q}$ is
guaranteed only to be a product distribution, and not necessarily uniform
over its support. To tackle this, we show how to {\em slice} this distribution
into several components, each of which is indeed uniform (or more generally,
when $XY$ is not uniform, each one is $p_{XY|(X,Y)\in r}$ for some rectangle
$r$). One could then repeat the above argument with respect to the slices.
However, including the index of the slice into $Q$ would result in a large gap
between its mutual information with $XY$, and that of the original $Q$.
Instead we add a single bit to $Q$ to indicate whether the slice is a large
rectangle or a small rectangle. We then argue that collecting the small
rectangles into one single subset will still result in a (relatively) small
subset. With this, the above outline can indeed be made to work.

We remark that the intuition that if, for most $q$, the support of
$p_{XY|Q=q}$ has a small mass in the original distribution $p_{XY}$, then
$I(XY;Q)$ should be large is formalized in \Lemmaref{Ibound}. This may be of
independent interest.

\subsection{Related Work}

Many of the recent advances in the field of communication complexity \cite{Yao79}
have followed from using various notions of information complexity.
Earlier notions of information complexity appeared implicitly in several
works \cite{Ablayev96,PonzioRaVe01,SaksSu02}, and was first explicitly
defined in \cite{ChakrabartiShWiYa01}. The current notion of (internal)
information complexity originated in \cite{Bar-YossefJaKuSi04}. Information
complexity has been extensively used in 
in the recent communication complexity
literature
\cite{BravermanRa11,Braverman12,BravermanWe12,ChakrabartiKoWa12,KerenidisLaLeRoXi12,BarakBrChRa13}.
The notion was also adapted to specialized models or tasks
\cite{JayramKuSi03, JainRaSe03, JainRaSe05,HarshaJaMcRa10}. The result in
\cite{BravermanWe12} (since generalized by \cite{KerenidisLaLeRoXi12})
relates most to the result we derive to illustrate the potential of tension
bounds.

The notion of common information, to which tension is closely related, was
developed in the information-theory
literature~\cite{GacsKo73,Wyner75,AhlswedeKo74,PrabhakaranPr14}. Recently,
it has found use in communication complexity, cryptography and other
problems in theoretical computer science, 
e.g.~\cite{HarshaJaMcRa10,BraunPo13,BraunJaLePo13,DataPrPr14}. Some special
cases of tension were implicit in the work of Wolf and Wullschleger
\cite{WolfWu05}, who used their monotonicity properties in a protocol to
lower-bound the number of oblivious transfers needed for various secure
computation tasks. The full-fledged notion of tension region was developed
in \cite{PrabhakaranPr14}. A multi-party notion of tension was defined in
\cite{PrabhakaranPr12}.

\section{Preliminaries}
\label{sec:prelims}

\paragraph{Notation.}
For brevity of notation, we shall often
denote the random-variables
$(X,Y)$ etc.\ by $XY$ etc.
Also, we shall often use a random variable to denote the probability
distribution of the random variable, when the random variables that it
is jointly distributed with are clear from the
context: i.e., we may write $Q$ instead of $\prob_{Q|XY}$.
We write $A-Q-B$ to indicate that $I(A;B|Q)=0$.

\paragraph{Communication Complexity.}
Let $\Pi(X;Y)$ be a (randomized) 2-party protocol with inputs to the two
parties being $X$ and $Y$ respectively. The two parties alternate sending
messages to each other; $\Pi$ specifies which party sends the first message,
and the function mapping each party's current view to the distribution over
the next message that it sends, and a distribution over an optional output
it produces (on producing an output, the party halts).  The messages can be
of arbitrary length, but should be self-terminating given the transcript so
far, and either of the two inputs.%
\footnote{The traditional definition of a protocol in the communication
complexity literature is slightly more restrictive: it requires that the
messages are self-truncating, given just the transcript so far. We note that
when the two parties have correlated inputs (e.g., as part of their private
inputs, they share a one-time pad which is used to mask the entire
communication) this should no more be required.\label{foot:model}}
For simplicity, we do not include public coins in our model; however, with
suitable modifications in the definitions, all our results would continue to
hold in such a model. In particular, we note that tension between two random
variables is not altered by adding a common random variable (i.e., the
public random tape) to both the random variables.

We write $\Pi(X;Y)\outputs(A;B)$ to denote that the random
variables $(A;B)$ (jointly distributed with $(X;Y)$) are the outputs
produced by the two parties on running $\Pi(X;Y)$. We denote by
\Pirate[(12)]{XY} (respectively, \Pirate[(21)]{XY})
the expected
number of bits sent by party 1 to party 2 (respectively, by party 2 to party
1) in the protocol $\Pi(X;Y)$; the expectation is over the randomness of the
protocol, as well as the input distribution $\prob_{XY}$.

The communication complexity -- or more precisely, the ``achievable
communication rate region'' -- for computing $(A;B)$ given $(X;Y)$, is
defined as:
\[\Capacity{X;Y}{A;B} = \{ (r_1,r_2)  \in \Real_+^2 : \exists \Pi \text{ s.t. } \Pi(X;Y) \outputs (A;B) 
\text{ and } \Pirate[(12)]{XY} \le r_1, \Pirate[(21)]{XY} \le r_2 \}. \]
Note that the region \Capacity{X;Y}{A;B} is an {\em upward closed region}.
In fact, the different regions we shall define and use are all upward
closed. 

A special case of interest is when the $A=B=f(X,Y)$, for a boolean function
$f:\cX\times\cY \rightarrow \{0,1\}$. 
In this case we shall typically require of a protocol that the two parties agree on the
outcome, but we shall allow the outcome to be wrong with some probability
$\eps$ (probability taken over the input distribution as well as the
randomness of the protocol). We define the {\em communication complexity region}
for $f$ (for an error probability $\eps$) to be:
\[ \Capacity[\eps]{X;Y}{f} 
= \bigcup_{\substack{\prob_{Z|XY}:\\\SDiff(\prob_{ZXY},\prob_{f(X,Y)XY})\le\eps}} \Capacity{X;Y}{Z;Z},\]
where $\SDiff(p_A,p_B)$ is the total variation distance between the distributions $p_A,p_B$ defined as $\SDiff(p_A,p_B)=\frac{1}{2}\sum_a |p_A(a)-p_B(a)|$.
Also of special interest is the (average-case) {\em communication
complexity}, which considers just the total number of bits communicated,
irrespective of the direction:
\[ \CC[\eps]{XY}{f} = \inf\; \{ r_1+r_2 : (r_1,r_2) \in \Capacity[\eps]{X;Y}{f} \}. \]

\paragraph{Information Complexity.}
The {\em information cost} of a protocol $\Pi$ is defined as follows. Let $\Pi(X;Y)\outputs (A;B)$
and let $M$ denote the transcript of $\Pi(X;Y)$. Then we define
\begin{align*}
\PiIC[(12)]{XY} &= I(X;M|Y), & \PiIC[(21)]{XY} &= I(Y;M|X).
\end{align*}
Then, $\PiIC{XY} = \PiIC[(12)]{XY}+\PiIC[(21)]{XY}$. We define the {\em information complexity region} as:
\[
\ICost{X;Y}{A;B} =  \{ (r_1,r_2) \in \Real_+^2 : \exists \Pi \text{ s.t. }  \Pi(X;Y) \outputs (A;B) \text{ and } 
	 \PiIC[(12)]{XY} \le r_1, \PiIC[(21)]{XY} \le r_2 \}.
\]
Of special interest is the following quantity --- the information complexity of computing $Z$ from $(X;Y)$.
\[ \IC{XY}{Z} = \inf\; \{ r_1+r_2 : (r_1,r_2) \in \ICost{X;Y}{Z;Z} \}. \]

\paragraph{Discrepancy.}
Let
$\cR=\{ \cX'\times\cY' : \cX'\subseteq \cX, \cY' \subseteq \cY \}$), the set
of all ``rectangles'' in $\cX\times\cY$. Then, given a distribution $\prob_{XY}$ over
$\cX\times\cY$, and a boolean function $f:\cX\times\cY\rightarrow\{0,1\}$,
we define
\begin{align*}
\Discrepancy{f}{XY} 
&= \max_{r\in\cR}
		\left| \pr[(X,Y)\in r \land f(X,Y)=0] - \pr[(X,Y)\in r \land f(X,Y)=1] \right| \\
&= \max_{\substack{\cX'\subseteq \cX,\\\cY'\subseteq\cY}}
		\left| \sum_{\substack{(x,y)\in\cX'\times\cY':\\f(x,y)=0}} \prob_{XY}(x,y) - \sum_{\substack{(x,y)\in\cX'\times\cY':\\f(x,y)=1}}\prob_{XY}(x,y)\right|.
\end{align*}

\subsection{Tension}
\label{sec:prelims-tension}

The tension region of a pair of random variables was defined in~\cite{PrabhakaranPr14} as the following upward closed region.
\begin{definition}
For a pair of random variables $A,B$, their {\em tension region}  \Rtens AB is defined as 
\begin{align*}
\Rtens AB = \{ (r_1,r_2,r_3):&\; \exists Q \text{ jointly distributed with } A,B \\
          &\text{ s.t. } I(B;Q|A) \le r_1, I(A;Q|B) \le r_2, I(A;B|Q) \le r_3 \}.
\end{align*}
\end{definition}
As shown in~\cite{PrabhakaranPr14}, without loss of generality, we may assume a cardinality bound $|\cQ|\leq|\cA||\cB|+2$ on the alphabet $\cQ$ in the above definition, where $\cA$ and $\cB$ are the alphabets of $A$ and $B$, respectively. It was also shown there that \Rtens AB has the interpretation as a rate-information tradeoff region for a distributed common randomness generation problem which generalizes the common randomness problem of G\'acs and K\"orner~\cite{GacsKo73}. \Rtens AB is a closed, convex region, with the following monotonicty property for randomized (public/private coins) protocols: Suppose $X$,$Y$ are the inputs and $A$,$B$ the outputs of the parties under a protocol. Let $M$ denote the transcript of the protocol. Let $V_A=(X,A,M)$ and $V_B=(Y,B,M)$ denote the views of the parties at the end of the protocol.
\begin{proposition}[Theorem~5.4 of \cite{PrabhakaranPr14}]\label{prop:monotonicity}
$\Rtens {V_A}{V_B} \supseteq \Rtens XY$.
\end{proposition}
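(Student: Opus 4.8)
I would reduce the statement to a single ``local processing'' step and then iterate it along the run of the protocol. The workhorse is the lemma: \emph{if $C$ is any random variable with $C-A-B$, then $\Rtens{(A,C)}{(B,C)}\supseteq\Rtens{A}{B}$, and likewise $\Rtens{(A,C)}{B}\supseteq\Rtens{A}{B}$}. Granting it, write $M=M_1\cdots M_k$ for the transcript and put $W_A^{(j)}=(X,M_{\le j})$, $W_B^{(j)}=(Y,M_{\le j})$, so that $W_A^{(0)}=X$, $W_B^{(0)}=Y$. By the standard product structure of protocol transcripts --- a short induction on rounds shows that conditioned on $(X,M_{\le j})$ party~$1$'s private randomness is independent of $Y$ --- if round $j+1$ is spoken by party~$1$ then $I(M_{j+1};Y\mid X,M_{\le j})=0$, i.e.\ $M_{j+1}-W_A^{(j)}-W_B^{(j)}$ (symmetrically for party~$2$). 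Applying the lemma with $(A,B,C)=(W_A^{(j)},W_B^{(j)},M_{j+1})$ gives $\Rtens{W_A^{(j+1)}}{W_B^{(j+1)}}\supseteq\Rtens{W_A^{(j)}}{W_B^{(j)}}$, so by induction $\Rtens{(X,M)}{(Y,M)}\supseteq\Rtens{X}{Y}$. The same structure gives $A-(X,M)-(Y,M)$ and then $B-(Y,M)-(X,A,M)$, so two applications of the one-sided part of the lemma add $A$ to the left view and $B$ to the right, yielding $\Rtens{V_A}{V_B}=\Rtens{(X,A,M)}{(Y,B,M)}\supseteq\Rtens{X}{Y}$.

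To prove the lemma, fix $(r_1,r_2,r_3)\in\Rtens{A}{B}$ with witness $Q$ satisfying $I(B;Q\mid A)\le r_1$, $I(A;Q\mid B)\le r_2$, $I(A;B\mid Q)\le r_3$. These quantities depend only on the law of $(A,B,Q)$, so I may extend the probability space so that also $Q-(A,B)-C$; combined with $C-A-B$, the chain rule then yields $C\perp(B,Q)\mid A$ (hence also $C\perp B\mid(A,Q)$ and $C\perp Q\mid A$) and $C\perp Q\mid(A,B)$. Take $Q'=(Q,C)$ as the witness for $\Rtens{(A,C)}{(B,C)}$. Since $C$ is a function of $(A,C)$ and of $(B,C)$, the chain rule collapses the three required quantities to $I(B;Q\mid A,C)$, $I(A;Q\mid B,C)$ and $I(A;B\mid Q,C)$. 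The first equals $I(B;Q\mid A)\le r_1$ by $C\perp(B,Q)\mid A$. Expanding $I((A,C);Q\mid B)$ two ways with the chain rule and using $C\perp Q\mid(A,B)$ gives $I(A;Q\mid B,C)=I(A;Q\mid B)-I(C;Q\mid B)\le r_2$; expanding $I((A,C);B\mid Q)$ two ways and using $C\perp B\mid(A,Q)$ gives $I(A;B\mid Q,C)=I(A;B\mid Q)-I(C;B\mid Q)\le r_3$. The one-sided version is the same computation with $Q'=Q$.

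The hard part is conceptual, not computational. Monotonicity is genuinely false if the added variable is decoupled from the protocol: attaching to the left side a $C$ forming a noisy channel with $B$ \emph{strictly shrinks} the region (it can push $(0,0,0)$ out of it), so one cannot, say, drag the private coins along inside an intermediate ``view'' and discard them at the end --- the induction must add exactly the sender's next message (or final output) at each step, and the property that makes every step legal is precisely the product/rectangle structure (conditional independence of that message from the far input given the near view). That same structure is what forces the witness to be $Q'=(Q,C)$ rather than $Q$ alone or a marginal of it, so as to keep all three tension coordinates controlled as $C$ is absorbed into both views.
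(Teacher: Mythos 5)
The paper never proves this proposition: it is imported as Theorem~5.4 of \cite{PrabhakaranPr14}, so there is no in-paper argument to compare with, and your proof has to stand on its own --- which it does. The one-step lemma is correct: extending the space so that $Q-(A,B)-C$ (legitimate, since the three witness quantities depend only on the law of $(A,B,Q)$) together with $C-A-B$ gives $I(C;B,Q\mid A)=0$ and $I(C;Q\mid A,B)=0$, and then the chain-rule identities $I(B;Q\mid A,C)=I(B;Q\mid A)$, $I(A;Q\mid B,C)=I(A;Q\mid B)-I(C;Q\mid B)$, $I(A;B\mid Q,C)=I(A;B\mid Q)-I(C;B\mid Q)$ all check out, so $Q'=(Q,C)$ witnesses the same point for $\Rtens{(A,C)}{(B,C)}$; and choosing $Q'=Q$ for the one-sided variant is the right move (keeping $C$ inside $Q'$ there would inflate the second coordinate by roughly $H(C\mid B)$). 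The induction over rounds and the two final applications that absorb the outputs are also sound. Three things you invoke without proof are standard but should be recorded for completeness: (i) the product/rectangle structure of private-coin protocols, i.e.\ $R_1\perp Y\mid(X,M_{\le j})$ at Alice's speaking turns and $R_2\perp(X,R_1)\mid(Y,M)$ when appending Bob's output $B$, which needs the usual induction on rounds (equivalently the factorization $p(r_1,r_2\mid x,y,m)=f(r_1,x,m)\,g(r_2,y,m)$); (ii) appending $B$ on the right uses the mirror image of your one-sided lemma, harmless because $\Rtens{B}{A}$ is $\Rtens{A}{B}$ with the first two coordinates swapped, but worth saying; (iii) since the paper's model allows messages of arbitrary length and an unbounded number of rounds, the per-round induction gives the containment for every truncation of the transcript, and one should pass to the limit using the closedness of the tension region. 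None of these is a gap of substance, and your closing observation --- that each step is legal precisely because the appended variable is generated from the near view alone, so one cannot carry private coins in the views and discard them at the end --- correctly identifies where the argument would break if misapplied (the paper's footnote on ``secure deletion'' in \cite{PrabhakaranPr14} addresses exactly that stronger setting).
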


In the sequel we will apply certain implications of the above result. Specifically, we will be interested in the inclusion relationship of certain restrictions of the tension regions of inputs and the views. For convenience, we define for $c\ge0$ the intersection of tension region with the plane $r_3=c$ as $\RT_c$. More precisely,
\begin{align*} \RT_c(A;B)&= \{ (r_1,r_2)\in\Real_+^2: (r_1,r_2,c)\in \Rtens{A}{B}\} \\
&= \{ (r_1,r_2)\in\Real_+^2: \exists \prob_{Q|A,B} \text{ s.t. } 
	I(B;Q|A) \le r_1,\;I(A;Q|B) \le r_2,\; I(A;B|Q) \le c\}.
\end{align*}
The case of $c=0$ will be of special interest to us. Here, we will focus on the minimum $r_1+r_2$. We define the {\em Wyner-tension} $\Twyn(A;B)$ of two jointly distributed random variables $A,B$ as
\[ \Twyn(A;B) = \inf \{ r_1+r_2 : (r_1,r_2)\in \RT_0(A;B) \}
= \inf_{\substack{\prob_{Q|AB}:\\A-Q-B}} I(A;Q|B)+I(B;Q|A). \]
This quantity is related to Wyner's common information $\CIwyn(A;B)$ of two random variables $A,B$~\cite{Wyner75}.  
\[ \CIwyn(A;B) = \inf_{\substack{\prob_{Q|AB}:\\A-Q-B}} I(A,B;Q).\]
It is easy to see the following~\cite{PrabhakaranPr14}.
\[\Twyn(A;B) = \CIwyn(A;B) - I(A;B).\]
Notice that $\CIwyn(A;B)\geq I(A;B)$ and $\Twyn(A;B)\geq 0$.

\section{Tension vs.\ Information Complexity}
\label{sec:tens-ic}

In this section, we lower-bound information complexity in terms
of tension. As we shall work with the more general information complexity
region \ICost{X;Y}{A;B}, the ``lower-bound'' corresponds to bounding the region away from the
origin. For this, we shall define a region
$\Outer{X;Y}{A;B}\subseteq{\mathbb R}_+^2$, which will then be used o outer-bound
the region \ICost{X;Y}{A;B}. We define:
\begin{align*}
\Outer{X;Y}{A;B} = \bigcap_{c\geq0} \big( \RT_c(B,Y;A,X) - \RT_c(Y;X) \big),
\end{align*}
where $S_1-S_2 = \{ (a,b) \in \Real_+^2 : (a,b)+S_2 \subseteq S_1 \}$ and $(a,b)+S$, for $a,b\in\Real$ and $S\subseteq \Real^2$, is $\{(x,y)\in\Real^2:  (x+a,y+b) \in S\}$.
We also define
\begin{align*}
\Outeradj{X;Y}{A;B} = (H(B|Y)-H(AB|XY),H(A|X)-H(AB|XY)) + \Outer{X;Y}{A;B}.
\end{align*}
Note that if $H(A|X)\ge H(AB|XY)$ and 
$H(B|Y)\ge H(AB|XY)$, then $\Outeradj{X;Y}{A;B} \subseteq \Outer{X;Y}{A;B}$.
These conditions
are satisfied if, for instance, $A=B$ (both parties output the same value),
or $H(A,B|X,Y)=0$ (the output is a deterministic function of the input), or
more generally if $H(A|B,X,Y)=H(B|A,X,Y)=0$ (i.e., any randomness in the
outputs given the inputs is common to both outputs). Even if these
conditions are not satisfied, if the outputs $A$ and $B$ are short,
then $\Outeradj{X;Y}{A;B}$ is close to $\Outer{X;Y}{A;B}$, and the
difference between the two can be ignored.
\begin{theorem} \label{thm:tens-ic}
$\ICost{X;Y}{A;B} \subseteq \Outeradj{X;Y}{A;B}$.
In particular, if $H(A|X) \ge H(A,B|X,Y)$ and $H(B|Y)\ge H(A,B|X,Y)$, then,
\[\ICost{X;Y}{A;B} \subseteq \Outer{X;Y}{A;B}.\]
\end{theorem}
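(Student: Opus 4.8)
<br>

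The plan is to take an arbitrary protocol $\Pi$ with $\Pi(X;Y)\outputs(A;B)$, let $M$ be its transcript, and set $V_A=(X,A,M)$, $V_B=(Y,B,M)$. The point $(\PiIC[(12)]{XY},\PiIC[(21)]{XY})=(I(X;M\mid Y),I(Y;M\mid X))$ is a generic point of $\ICost{X;Y}{A;B}$, and since $\Outeradj{X;Y}{A;B}$ is upward closed it suffices to show this point lies in $\Outeradj{X;Y}{A;B}$, equivalently (after subtracting the shift vector) to exhibit, for \emph{every} $c\ge 0$, membership of the shifted point in $\RT_c(B,Y;A,X)-\RT_c(Y;X)$. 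So fix $c\ge 0$. By definition of the set difference $S_1-S_2$, I must show that adding any $(a,b)\in\RT_c(Y;X)$ to the shifted information-cost point lands inside $\RT_c(B,Y;A,X)$.

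The main tool is \Propositionref{monotonicity}: $\Rtens{V_A}{V_B}\supseteq\Rtens{XY}$, i.e. $\Rtens{A,X,M}{B,Y,M}\supseteq\Rtens{X}{Y}$. Restricting to the plane $r_3=c$, this says $\RT_c(A,X,M;B,Y,M)\supseteq\RT_c(X;Y)$. So if $(a,b)\in\RT_c(Y;X)=\RT_c(X;Y)$, there is a witness $Q$ with $I(B,Y,M;Q\mid A,X,M)\le a$, $I(A,X,M;Q\mid B,Y,M)\le b$, and $I(A,X,M;B,Y,M\mid Q)\le c$. Now I want to convert this into a witness for $(B,Y;A,X)$ at level $c$, at the cost of the information-cost shift. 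The natural move is to use $Q'=(Q,M)$ as the new auxiliary variable (or, since $M$ is already ``inside'' $V_A$ and $V_B$, to absorb $M$). The first coordinate then becomes $I(B,Y;Q'\mid A,X)=I(B,Y;Q,M\mid A,X)=I(B,Y;M\mid A,X)+I(B,Y;Q\mid A,X,M)$; the second term is $\le a$ by the witness property, and the first term is the ``information leaked by $M$'' which I need to bound by roughly $I(X;M\mid Y)$ (up to the entropy correction in the tilde-shift). Symmetrically for the second coordinate. The third coordinate $I(B,Y;A,X\mid Q')=I(B,Y;A,X\mid Q,M)\le I(A,X,M;B,Y,M\mid Q)\le c$, using that conditioning on more and looking at a subset only decreases.

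The crux is the bound $I(B,Y;M\mid A,X)\le I(X;M\mid Y)+\big(H(B\mid Y)-H(A,B\mid X,Y)\big)$ (and its mirror), which is exactly what produces the shift vector defining $\Outeradj{}{}$ and, under the stated entropy hypotheses, collapses it to $\Outer{}{}$. I would prove this by a chain-rule expansion: $I(B,Y;M\mid A,X)=I(Y;M\mid A,X)+I(B;M\mid A,X,Y)$. For the first term, since $A$ is a (randomized) function of $(X,M)$ and more carefully the Markov structure of a protocol gives $A-(X,M)-Y$ type relations, one massages $I(Y;M\mid A,X)$ against $I(Y;M\mid X)=I(X;M\mid Y)+H(X\mid Y,M)-\ldots$; really the clean way is to write $I(B,Y;M\mid A,X)\le I(B,Y,M;M\mid A,X)$ and use $H(B,Y\mid A,X)\le H(B,Y,M\mid A,X)$ together with the protocol property that, conditioned on the transcript, the two parties' inputs (and hence outputs) are independent: $(X,A)-M-(Y,B)$ when inputs are independent, or the conditional-independence-increase statement in general. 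I expect the routine-but-delicate bookkeeping that converts the transcript-leakage terms into $I(X;M\mid Y)$ plus the entropy corrections $H(B\mid Y)-H(AB\mid XY)$ to be the main obstacle; everything else (upward-closedness, the set-difference unwinding, applying \Propositionref{monotonicity} plane by plane, the easy third-coordinate bound) is essentially formal.
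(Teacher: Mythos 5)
Your overall architecture is the same as the paper's: apply \Propositionref{monotonicity} to the views $V_A=(X,A,M)$, $V_B=(Y,B,M)$, transfer the resulting auxiliary variable to the input--output pairs $(X,A)$, $(Y,B)$ by absorbing/stripping the transcript $M$ (using $Q'=(Q,M)$ instead of $Q$ itself is an immaterial variation), handle the third coordinate by the chain rule, and pay the leakage terms $I(X,A;M|Y,B)$ and $I(Y,B;M|X,A)$, which must then be compared with the information costs. The plane-by-plane use of monotonicity, the set-difference unwinding, and upward-closedness are indeed formal, as you say.

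The genuine gap is exactly the step you flag as the crux, and as you state it the argument would not prove the theorem. What is needed is $I(X,A;M|Y,B)\le I(X;M|Y)-\bigl(H(B|Y)-H(A,B|X,Y)\bigr)$ and its mirror: the Alice-side leakage is at most the Alice-side cost \emph{minus} the shift. You instead pair the Bob-side leakage $I(B,Y;M|A,X)$ with the Alice-side cost $I(X;M|Y)$, and, more importantly, with a \emph{plus} sign. With a plus sign the argument only places the information-cost point in a translate of $\bigcap_{c\ge0}\bigl(\RT_c(B,Y;A,X)-\RT_c(Y;X)\bigr)$ \emph{toward} the origin, which is strictly weaker than membership in $\Outeradj{X;Y}{A;B}$ and, under the hypotheses $H(A|X)\ge H(A,B|X,Y)$ and $H(B|Y)\ge H(A,B|X,Y)$, does not yield $\ICost{X;Y}{A;B}\subseteq\Outer{X;Y}{A;B}$ (with your sign those hypotheses work against you). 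Your sketched derivation of the crux also does not go through: $I(B,Y;M|A,X)\le I(B,Y,M;M|A,X)=H(M|A,X)$ is a vacuous over-bound, and the rectangle property $(X,A)-M-(Y,B)$ holds only for independent inputs and is in any case not the right tool. The correct (short but non-formal) derivation uses the protocol Markov chain $B-(Y,M)-X$ to write $I(X;M|Y)=I(X;B,M|Y)=I(X;B|Y)+I(X;M|Y,B)$, then $I(X;M|Y,B)=I(X,A;M|Y,B)-I(A;M|X,Y,B)\ge I(X,A;M|Y,B)-H(A|X,Y,B)$, and finally $I(X;B|Y)-H(A|X,Y,B)=H(B|Y)-H(A,B|X,Y)$; you allude to the Markov structure of the protocol but never produce this chain, and it is the one step of the proof that is not bookkeeping.
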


\begin{proof}
Consider any protocol $\Pi$ that takes $(X;Y)$ as input and outputs $(A;B)$.
Let $U_A=(X,A)$, $U_B=(Y,B)$, the input-output of Alice and Bob; and let $M$
be the transcript of the messages exchanged between Alice and Bob.
\begin{align}
\PiIC[(12)]{XY} &= I(X;M|Y) \\
&\stackrel{\text{(a)}}{=} I(X;M,B|Y)= I(X;B|Y) + I(X;M|Y,B) \notag\\
&= I(X;B|Y) - I(A;M|X,Y,B) + I(X,A;M|Y,B) \notag\\
&\ge I(X;B|Y) - H(A|X,Y,B) + I(X,A;M|Y,B) \notag\\
&= H(B|Y) - H(A,B|X,Y) + I(U_A;M|U_B),\label{eq:ic12}
\end{align}
where (a) follows from the Markov chain $B - (Y,M) - X$.
Similarly,
\begin{align}
\PiIC[(12)]{XY} \ge H(A|X) - H(A,B|X,Y) + I(U_B;M|U_A).\label{eq:ic21}
\end{align}
Then it is enough to outer bound the region containing
$(I(U_A;M|U_B),I(U_B;M|U_A))$. Let $V_A=(U_A,M)$, $V_B=(U_B,M)$, the views of Alice and Bob at the end of
the protocol. By~\Propositionref{monotonicity},
\begin{align*}
\Rtens{V_B}{V_A} \supseteq \Rtens{Y}{X}.
\end{align*}
This implies that, for each $\prob_{Q|X,Y}$, there exists a
$\prob_{\tilde{Q}|V_A,V_B}$ such that,
\begin{align}
I(V_A;\tilde{Q}|V_B) &\leq I(X;Q|Y),\label{eq:proof1}\\
I(V_B;\tilde{Q}|V_A) &\leq I(Y;Q|X),\\
I(V_A;V_B|\tilde{Q}) &\leq I(X;Y|Q).\label{eq:proof3}
\end{align}
But,
\begin{align*}
I(V_A;\tilde{Q}|V_B) &= I(U_A,M;\tilde{Q}|U_B,M)= I(U_A;\tilde{Q},M|U_B) - I(U_A;M|U_B)\\ &\geq I(U_A;\tilde{Q}|U_B) - I(U_A;M|U_B)
\end{align*}
Similarly,
\begin{align*}
I(V_B;\tilde{Q}|V_A) &\geq I(U_B;\tilde{Q}|U_A) - I(U_B;M|U_A),\\
I(V_B;V_A|\tilde{Q}) &\geq I(U_A;U_B|\tilde{Q}).
\end{align*}
Using these in \eqref{eq:proof1}-\eqref{eq:proof3}, we have that for all
$Q$, there exists $\tilde{Q}$ such that
\begin{align*}
I(U_A;M|U_B) + I(X;Q|Y) &\geq I(U_A;\tilde{Q}|U_B),\\ 
I(U_B;M|U_A) + I(Y;Q|X) &\geq I(U_B;\tilde{Q}|U_A),\\ 
  I(X;Y|Q) &\geq I(U_A;U_B|\tilde{Q}).\\ 
\end{align*}
Hence, for every $c\ge 0$, we have 
\[(I(U_A;M|U_B),I(U_B;M|U_A))+ \RT_c(Y;X) \subseteq \RT_c(U_B;U_A) .\]
In other words, $(I(U_A;M|U_B),I(U_B;M|U_A))$ must lie in the set
$\RT_c(U_B;U_A) - \RT_c(Y;X)$. Combined with \eqref{eq:ic12} and
\eqref{eq:ic21}, we get that
$(\PiIC[(12)]{XY},\PiIC[(21)]{XY})$ $\in$ $(H(B|Y)-H(AB|XY),H(A|X)-H(AB|XY)) + $ $\bigcap_{c\geq0} \RT_c(U_B;U_A) - \RT_c(Y;X)$.
Since this holds for all $\Pi$ such that $\Pi(X;Y)\outputs(A;B)$, we get
\begin{align*}
\ICost{X;Y}{A;B} \subseteq (H(B|Y)-H(AB|XY),H(A|X)-H(AB|XY)) + \Outer{X;Y}{A;B} = \Outeradj{X;Y}{A;B}.
\end{align*}
\end{proof}

\begin{corol}
\label{cor:tens-ic}
For all $X,Y,Z$, \[ \IC{XY}{Z} \ge \Twyn(XZ;YZ) - \Twyn(X;Y). \]
In particular, if $X$ and $Y$ are independent of each other, 
$\IC{XY}Z \ge \Twyn(XZ;YZ)$.
\end{corol}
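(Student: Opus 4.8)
The plan is to obtain \Corollaryref{tens-ic} as a direct specialization of \Theoremref{tens-ic} to the case in which both parties output a copy of $Z$, i.e.\ $A = B = Z$, followed by projecting the resulting region inclusion onto the linear functional $r_1 + r_2$. First I would check the hypotheses of the second (``in particular'') clause of \Theoremref{tens-ic} for $A = B = Z$: here $H(A,B\mid X,Y) = H(Z\mid X,Y)$, and since conditioning cannot increase entropy, $H(A\mid X) = H(Z\mid X) \ge H(Z\mid X,Y)$ and $H(B\mid Y) = H(Z\mid Y) \ge H(Z\mid X,Y)$. Hence the theorem gives $\ICost{X;Y}{Z;Z} \subseteq \Outer{X;Y}{Z;Z}$, where $\Outer{X;Y}{Z;Z} = \bigcap_{c\ge 0}\big(\RT_c(Z,Y;Z,X) - \RT_c(Y;X)\big)$.

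Next, since $\IC{XY}{Z} = \inf\{r_1+r_2 : (r_1,r_2)\in\ICost{X;Y}{Z;Z}\}$, taking the infimum of $r_1+r_2$ over both sides of the inclusion yields $\IC{XY}{Z} \ge \inf\{r_1+r_2 : (r_1,r_2)\in\Outer{X;Y}{Z;Z}\}$. I would then discard all terms of the intersection except $c = 0$ — which only enlarges the set, hence only lowers the infimum — to obtain $\IC{XY}{Z} \ge \inf\{\, r_1+r_2 : (r_1,r_2)\in \RT_0(Z,Y;Z,X) - \RT_0(Y;X)\,\}$. The final step is the elementary Minkowski-difference bound already recorded in the introduction: for nonempty $S_1, S_2 \subseteq \Real_+^2$ one has $\inf_{(a,b)\in S_1 - S_2}(a+b) \ge \inf_{(a,b)\in S_1}(a+b) - \inf_{(a,b)\in S_2}(a+b)$, seen by unwinding the definition of $S_1-S_2$ (if $(a,b)+S_2\subseteq S_1$ and $(c,d)\in S_2$ has $c+d$ arbitrarily close to $\inf_{S_2}(x+y)$, then $(a+c,b+d)\in S_1$, so $a+b \ge \inf_{S_1}(x+y) - (c+d)$). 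Applying this with $S_1 = \RT_0(Z,Y;Z,X)$ and $S_2 = \RT_0(Y;X)$, using that $\RT_0(\cdot\,;\cdot)$ is symmetric in its two arguments up to reflection across the diagonal (so the infimum of $r_1+r_2$ is unchanged), and invoking $\Twyn(A;B) = \inf\{r_1+r_2 : (r_1,r_2)\in\RT_0(A;B)\}$, this gives $\IC{XY}{Z} \ge \Twyn(XZ;YZ) - \Twyn(X;Y)$. The ``in particular'' assertion is then immediate: if $X$ and $Y$ are independent, taking $Q$ constant satisfies $X-Q-Y$ with $I(X;Q\mid Y) = I(Y;Q\mid X) = 0$, so $\Twyn(X;Y) = 0$.

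There is essentially no genuine obstacle here, since the substantive content lies in \Theoremref{tens-ic} and \Propositionref{monotonicity}. The only points needing a little care are the coordinate-swap/symmetry bookkeeping required to identify the infimum of $r_1+r_2$ over $\RT_0(Z,Y;Z,X)$ with $\Twyn(XZ;YZ)$ (and likewise over $\RT_0(Y;X)$ with $\Twyn(X;Y)$), and confirming that both regions are nonempty — e.g.\ $Q=(X,Y,Z)$ witnesses a point of $\RT_0(Z,Y;Z,X)$ and $Q=(X,Y)$ a point of $\RT_0(Y;X)$ — so that the Minkowski-difference inequality is not applied vacuously.
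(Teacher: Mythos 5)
Your proposal is correct and follows essentially the same route as the paper's own proof: specialize \Theoremref{tens-ic} to $A=B=Z$ (the hypothesis holding because conditioning cannot increase entropy), restrict the intersection to $c=0$, apply the Minkowski-difference inequality $\inf_{S_1-S_2}(a+b)\ge \inf_{S_1}(a+b)-\inf_{S_2}(a+b)$, and conclude via the definition and symmetry of $\Twyn$. The extra touches you add (nonemptiness of the regions, the explicit constant-$Q$ argument for the independent case) are harmless refinements of the same argument.
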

\begin{proof}
Firstly, note that the condition in \Theoremref{tens-ic} holds when $A=B=Z$,
since $H(Z|X)\le H(Z|XY)$ and $H(Z|Y)\le H(Z|XY)$. Thus,
\[ \ICost{X;Y}{Z;Z} \subseteq  \Outer{X;Y}{Z;Z} \subseteq \RT_0(YZ;XZ) - \RT_0(Y;X).\]
Then, $\IC{XY}Z = \inf_{(a,b)\in\ICost{X;Y}{Z;Z}} (a+b) \ge \inf_{(a,b)\in \RT_0(YZ;XZ) - \RT_0(Y;X)} (a+b)$.
Now,  $\forall (a,b)\in(S_1-S_2)$, we have $S_1 \supseteq (a,b)+S_2$; hence,
\[ \inf_{(r_1,r_2)\in S_1} (r_1+r_2) \le \inf_{(a,b)\in S_1-S_2} (a+b)+\inf_{(r_1,r_2)\in S_2} (r_1+r_2).\]
Recall that $\inf_{(r_1,r_2)\in \RT_0{U;V}} (r_1+r_2) = \Twyn(U;V)$. Thus,
\[ \IC{XY}Z \ge \Twyn(YZ;XZ)-\Twyn(Y;X).\]
The statement in the theorem follows from the symmetry of \Twyn.
\end{proof}

\section{Information Complexity vs.\ Communication Complexity}
\label{sec:ic-cc}
Below we show that the communication complexity region is outer-bounded by
the information complexity region. We start with \Lemmaref{ic-cc} below,
which relates the communication cost pair of a protocol to its information cost
pair. A simplified version of this result that
has been used extensively, namely, $\PiIC{XY}  \le \Piratex$, appears in
\cite{BravermanRa11}. Note that from  \Lemmaref{ic-cc} it follows that, in
fact, $\PiIC{XY} \le \Pirate{XY}$ (and clearly, $\Pirate{XY} \le \Piratex$).
That is, the information-complexity lower-bound applies not just to the
worst case communication complexity, but also to the average case
communication complexity.
\begin{lemma}
\label{lem:ic-cc}
For any protocol $\Pi$ and input distribution $(X,Y)$, the following hold:
\begin{align*}
\PiIC[(12)]{XY} \le  \Pirate[(12)]{XY}, &&
\PiIC[(21)]{XY} \le  \Pirate[(21)]{XY}.
\end{align*}
In particular, $\PiIC{XY} \le \Pirate{XY}$.
\end{lemma}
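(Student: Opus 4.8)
The plan is the standard message-by-message accounting of information cost, adapted to the relaxed communication model of \Footnoteref{model}. Decompose the transcript into its individual messages $M_1, M_2, \dots$, in the order sent; since $\PiIC[(12)]{XY} = I(X;M\mid Y)$ for the concatenated transcript $M$, and the ordered tuple of messages can only carry more information than its concatenation, it suffices to bound $I(X; M_1, M_2, \dots \mid Y)$. By the chain rule for conditional mutual information,
\[ I(X; M_1, M_2, \dots \mid Y) = \sum_{i} I(X; M_i \mid Y, M_{<i}), \qquad M_{<i} := (M_1,\dots,M_{i-1}), \]
and I would bound the $i$-th term according to which party sends $M_i$.

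If $M_i$ is sent by party $2$ (Bob), it is produced as a randomized function of Bob's current view $(Y,M_{<i})$ using fresh private randomness independent of $(X,Y,M_{<i})$; hence $X-(Y,M_{<i})-M_i$ is a Markov chain and $I(X;M_i\mid Y,M_{<i})=0$, so such rounds contribute nothing. If $M_i$ is sent by party $1$ (Alice), I would use $I(X;M_i\mid Y,M_{<i}) \le H(M_i\mid Y,M_{<i})$ and invoke the defining property of the model: $M_i$ is self-terminating given the transcript-so-far together with \emph{either} input, in particular given $(M_{<i},Y)$ (this is exactly what lets Bob parse Alice's message). Thus, conditioned on any fixed value $(M_{<i}=m, Y=y)$, the support of $M_i$ is a prefix-free set of binary strings, and the converse to Shannon's source-coding theorem (Kraft's inequality, in the form $H\le L$) gives $H(M_i\mid M_{<i}=m, Y=y) \le \mathbb{E}[\,|M_i| \mid M_{<i}=m, Y=y\,]$; averaging over $(M_{<i},Y)$ yields $H(M_i\mid Y,M_{<i}) \le \mathbb{E}[\,|M_i|\,]$.

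Summing over $i$ then gives $\PiIC[(12)]{XY} \le \sum_{i:\,\text{Alice sends }M_i}\mathbb{E}[\,|M_i|\,] = \Pirate[(12)]{XY}$, the expected number of bits Alice sends Bob. The bound $\PiIC[(21)]{XY}\le\Pirate[(21)]{XY}$ follows by the symmetric argument (there one uses that $M_i$ is self-terminating given $(M_{<i},X)$, so that Alice can parse Bob's message), and adding the two inequalities proves $\PiIC{XY}\le\Pirate{XY}$; together with the trivial $\Pirate{XY}\le\Piratex$ this recovers the bound of \cite{BravermanRa11}. I expect the only real subtleties to be (i) making the message decomposition and chain rule fully rigorous when the number of rounds is a random stopping time rather than fixed — this is routine but delicate bookkeeping, since a naive padding of the transcript with ``empty'' messages to mark termination spoils the prefix-free property and leaks an extra bit per round — and (ii) getting the conditioning right in the Kraft step: it is essential that for an Alice-message we condition decodability on the \emph{receiver}'s input $Y$, precisely the variable already appearing in the conditional entropy we are bounding, so that $H\le L$ applies verbatim.
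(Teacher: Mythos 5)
Your proof is correct, but it takes a genuinely different route from the paper's. The paper never decomposes the transcript into messages at all: it expands $\PiIC[(12)]{XY}=I(X;M\mid Y)$ \emph{bit by bit} via the chain rule, regroups the resulting sum over complete transcripts $\widehat m$ weighted by $\pr[M=\widehat m]$, and bounds each term by noting that, given $Y=y$ and the prefix $\widehat m^i$, the next bit contributes $0$ if it is Bob's to send and at most $H(M_{i+1})\le 1$ otherwise; summing over positions gives $|\widehat m|_{12}$ (defined there as an expectation over the input, since the attribution of bits to senders may depend on it), hence \Pirate[(12)]{XY}. Your message-level argument instead needs the source-coding converse $H\le \mathbb{E}[L]$, which is exactly where you use the model's guarantee that Alice's message is prefix-free given $(M_{<i},Y)$. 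The trade-off: the bit-level accounting buys freedom from Kraft's inequality and from your padding worry (an Alice-bit costs at most $1$ regardless of message boundaries), while your version is shorter, more conceptual, and makes the role of \Footnoteref{model} explicit. Your subtlety (i) is real but is resolved by the same model assumption you already invoke in (ii): since the receiver can parse the transcript-so-far from $(Y,M_{<i})$, whether the protocol has already terminated (i.e., whether $M_i$ is the empty placeholder) is determined by the conditioning variables, so in each Alice-round the conditional law of $M_i$ is supported either on the placeholder alone or on a prefix-free set of nonempty strings, and $H(M_i\mid Y=y,M_{<i}=m)\le \mathbb{E}\bigl[|M_i| \mid Y=y,M_{<i}=m\bigr]$ holds with the placeholder having length $0$; the same determinacy is what makes the Markov-chain step for Bob-rounds exact (otherwise the event ``Alice already halted'' could itself leak information about $X$). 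Note that the paper leans on precisely the same implicit assumption — its inequality (a) needs the identity of the next sender, or termination, to be determined by $y$ and the prefix — and relegates the corresponding bookkeeping to a footnote, so your proof is no less rigorous than the original on this point.
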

\begin{proof}
We shall show that 
$\PiIC[(12)]{XY} \le  \Pirate[(12)]{XY}$; the second inequality follows
similarly, and the third is obtained by adding the first two inequalities.
Below, the random variable $M$ denotes the transcript of the protocol $\Pi$
with input $(X;Y)$, $M_i$ denotes the $i^\text{th}$ bit of $M$, and
$M^i$ denotes the first $i$ bits of $M$. For notational convenience, we
define $M_i$ to be a fixed symbol (say, 0) if $i$ is greater than the length
of $M$. Let \cM be the set of all complete transcripts.%
\footnote{Since we do not require the transcripts to be parsable on their
own without an input (see \Footnoteref{model}), strictly speaking, the set
of complete transcripts is not well-defined. However, \cM can be defined
more loosely as, for instance, the set of all strings of length $d$, where
$d$ is an upperbound on the worst-case communication cost of the protocol,
and the arguments in the proof continue to hold. In fact, even if this cost is
unbounded, but as long as the average cost \Pirate[(12)]{XY} is bounded
(otherwise the inequality is trivial to see), it is possible to extend the
proof by considering $d\rightarrow\infty$.}
Also, for $m\in\cM$, we write $|m|_{12}$ to denote the (expected) number of
bits in $m$ that are sent by party 1 to party 2 (expectation over either
input), and similarly $|m|_{21}$ to denote the bits in the other direction,
so that $|m|=|m|_{12}+|m|_{21}$.

\begin{align*}
\PiIC[(12)]{XY} 
&= I(M; X | Y) = \sum_{i=0}^{\infty} I(M_{i+1}; X | Y, M^i) \\
&= \sum_{i=0}^{\infty} \sum_{m\in\{0,1\}^i} \pr[M^i=m] \cdot I(M_{i+1}; X | Y, M^i=m) \\
&= \sum_{i=0}^{\infty} \sum_{m\in\{0,1\}^i} 
		\left( \sum_{\substack{\widehat{m}\in\cM:\\m=\widehat{m}^i}} \pr[M=\widehat m] \right)
		\cdot I(M_{i+1}; X | Y, M^i=m) \\
&= \sum_{i=0}^{\infty} \sum_{\widehat{m}\in\cM}  \pr[M=\widehat m] I(M_{i+1}; X | Y, M^i=\widehat{m}^i) \\
&= \sum_{\widehat{m}\in\cM} \pr[M=\widehat m] \cdot \sum_{i=0}^{|\widehat{m}|-1}  I(M_{i+1}; X | Y, M^i=\widehat{m}^i) \\
&\stackrel{\text{(a)}}{\le} 
	\sum_{\widehat{m}\in\cM} \pr[M=\widehat m] \cdot |\widehat{m}|_{12} = \Pirate[(12)]{XY}
\end{align*}
where inequality (a) follows from the fact that, for each value of $y$, $I(M_{i+1}; X | Y=y, M^i=\widehat{m}^i) = 0$ if, after
$\widehat{m}^i$ (and given $Y=y$), the next message is sent by Bob, and otherwise $I(M_{i+1}; X | Y=y, M^i=\widehat{m}^i) \le H(M_{i+1}) \le 1$.
\end{proof}

The following theorem is an immediate consequence of \Lemmaref{ic-cc}.
\begin{theorem}
\label{thm:ic-cc} $\Capacity{X;Y}{A;B} \subseteq \ICost{X;Y}{A;B}$.
\end{theorem}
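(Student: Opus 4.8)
The plan is to read off the inclusion directly from \Lemmaref{ic-cc}, using only the definitions of the two regions; no further machinery — not even the upward-closedness of the regions — is needed. First I would unfold what it means for a pair $(r_1,r_2)\in\Real_+^2$ to lie in $\Capacity{X;Y}{A;B}$: by definition there is a protocol $\Pi$ with $\Pi(X;Y)\outputs(A;B)$ such that $\Pirate[(12)]{XY}\le r_1$ and $\Pirate[(21)]{XY}\le r_2$. This same protocol $\Pi$ will be the witness certifying membership of $(r_1,r_2)$ in $\ICost{X;Y}{A;B}$.

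Next I would apply \Lemmaref{ic-cc} to this very protocol $\Pi$ (with the given input distribution $(X,Y)$), obtaining $\PiIC[(12)]{XY}\le\Pirate[(12)]{XY}$ and $\PiIC[(21)]{XY}\le\Pirate[(21)]{XY}$. Chaining these with the bounds from the previous step gives $\PiIC[(12)]{XY}\le r_1$ and $\PiIC[(21)]{XY}\le r_2$. Since $\Pi(X;Y)\outputs(A;B)$ still holds, the defining condition of $\ICost{X;Y}{A;B}$ is met, so $(r_1,r_2)\in\ICost{X;Y}{A;B}$. As $(r_1,r_2)$ was an arbitrary point of $\Capacity{X;Y}{A;B}$, this establishes the claimed containment.

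At this level there is essentially no obstacle — the argument is a one-line unfolding of definitions, which is why the paper calls it an immediate consequence. All the real content sits in \Lemmaref{ic-cc}, whose (already given) proof handles the genuine subtlety: in a single transcript, only the bits actually sent in the $1\to 2$ direction contribute to $I(M_{i+1};X\mid Y,M^i)$, and each such bit contributes at most $1$; summing over the transcript then matches the expected $1\to 2$ communication cost $\Pirate[(12)]{XY}$. The one minor point I would keep in mind is that both ``costs'' are expectations over the same input distribution $\prob_{XY}$ and the same protocol randomness, so the inequalities of \Lemmaref{ic-cc} compose cleanly with the coordinate-wise bounds $\le r_1$, $\le r_2$ without any change of measure.
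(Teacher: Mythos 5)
Your proof is correct and follows essentially the same route as the paper: both deduce the containment immediately from \Lemmaref{ic-cc} together with the definitions of $\Capacity{X;Y}{A;B}$ and $\ICost{X;Y}{A;B}$. The only cosmetic difference is that the paper places the point $(\Pirate[(12)]{XY},\Pirate[(21)]{XY})$ in $\ICost{X;Y}{A;B}$ and then invokes upward closedness, whereas you chain the inequalities $\PiIC[(12)]{XY}\le\Pirate[(12)]{XY}\le r_1$ and $\PiIC[(21)]{XY}\le\Pirate[(21)]{XY}\le r_2$ directly, which amounts to the same thing since that closedness is built into the definition.
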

\begin{proof}
Consider any protocol $\Pi$ that takes $(X;Y)$ as input and outputs $(A;B)$.
By \Lemmaref{ic-cc}, $\PiIC[(12)]{XY} \le  \Pirate[(12)]{XY}$ and
$\PiIC[(21)]{XY} \le  \Pirate[(21)]{XY}$. Thus, by definition of
\ICost{X;Y}{A;B}, $(\Pirate[(12)]{XY},\Pirate[(21)]{XY}) \in \ICost{X;Y}{A;B}$.
Since this holds for all $\Pi$ such that $\Pi(X;Y)\outputs(A;B)$,
and \ICost{X;Y}{A;B} is an upward closed region, the theorem follows.
\end{proof}

Following the definitions, the above theorem yields the following
lower-bound:
\[\CC[\eps]{X;Y}f \ge \inf_{\substack{\prob_{Z|XY}:\\\SDiff(\prob_{ZXY},\prob_{f(X,Y)XY})\le\eps}} \IC{XY}Z.\]
Combining this with \Corollaryref{tens-ic}, we obtain the following
lower-bound on (average-case) communication complexity.
\begin{corol}
\label{cor:tens-cc}
For all $\eps\ge0$,
$\displaystyle\CC[\eps]{X;Y}f \ge \inf_{\substack{\prob_{Z|XY}:\\\SDiff(\prob_{ZXY},\prob_{f(X,Y)XY})\le\eps}} \Twyn(XZ;YZ) - \Twyn(X;Y).$

In particular, if $(X,Y)$ are independent of each other,
$\displaystyle\CC[\eps]{X;Y}f \ge \inf_{\substack{\prob_{Z|XY}:\\\SDiff(\prob_{ZXY},\prob_{f(X,Y)XY})\le\eps}} \Twyn(XZ;YZ).$
\end{corol}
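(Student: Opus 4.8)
The plan is to chain the two inclusions established earlier in the paper, with no new ingredients. First I would unfold the definition $\CC[\eps]{X;Y}f = \inf\{r_1+r_2 : (r_1,r_2)\in\Capacity[\eps]{X;Y}{f}\}$ together with $\Capacity[\eps]{X;Y}{f} = \bigcup_{\prob_{Z|XY}:\,\SDiff(\prob_{ZXY},\prob_{f(X,Y)XY})\le\eps}\Capacity{X;Y}{Z;Z}$, so that the infimum over the union splits as $\CC[\eps]{X;Y}f = \inf_{Z}\,\inf\{r_1+r_2 : (r_1,r_2)\in\Capacity{X;Y}{Z;Z}\}$, where $Z$ ranges over the admissible conditional distributions. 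By \Theoremref{ic-cc}, $\Capacity{X;Y}{Z;Z}\subseteq\ICost{X;Y}{Z;Z}$, and since the infimum of $r_1+r_2$ is monotone under set inclusion, the inner infimum is at least $\inf\{r_1+r_2:(r_1,r_2)\in\ICost{X;Y}{Z;Z}\}=\IC{XY}Z$. This yields the intermediate bound $\CC[\eps]{X;Y}f \ge \inf_{Z}\IC{XY}Z$ already recorded in the text.

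Next I would invoke \Corollaryref{tens-ic}, which gives $\IC{XY}Z \ge \Twyn(XZ;YZ)-\Twyn(X;Y)$ for every $Z$; taking the infimum over the admissible $Z$ on both sides produces the first inequality of the corollary. For the ``in particular'' part, I would note that when $X$ and $Y$ are independent, $\Twyn(X;Y)=0$: choosing $Q$ constant in the defining infimum makes $I(X;Q|Y)=I(Y;Q|X)=0$ and $I(X;Y|Q)=I(X;Y)=0$, so $(0,0)\in\RT_0(X;Y)$ and hence $\Twyn(X;Y)=0$. Substituting this into the general bound gives $\CC[\eps]{X;Y}f \ge \inf_{Z}\Twyn(XZ;YZ)$.

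I do not expect any genuine obstacle: every step is a direct application of a previously established inclusion (\Theoremref{ic-cc} and \Corollaryref{tens-ic}) combined with the elementary fact that $\inf$ of $r_1+r_2$ respects set inclusion. The only points requiring a moment of care are the bookkeeping — that the union over $Z$ commutes with the infimum defining $\CC[\eps]{X;Y}f$ — and the one-line verification that independence of $X$ and $Y$ forces $\Twyn(X;Y)=0$.
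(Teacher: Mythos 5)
Your proposal is correct and follows the paper's own route exactly: chain $\Capacity{X;Y}{Z;Z}\subseteq\ICost{X;Y}{Z;Z}$ from \Theoremref{ic-cc} to get $\CC[\eps]{X;Y}f\ge\inf_Z\IC{XY}Z$, then apply \Corollaryref{tens-ic} (whose ``in particular'' clause already covers the independent case, which you re-verify harmlessly). No gaps.
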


\section{Bounding Tension Using Discrepancy}
\label{sec:disc-tens}

\begin{theorem}
\label{thm:disc-tens}
Suppose $(X,Y)$ are independent random variables over $\cX\times\cY$,
and $f:\cX\times\cY\rightarrow\{0,1\}$ is a function with
$\Discrepancy{f}{XY} \le \disc$. Also, suppose $Z$ is a binary random
variable jointly distributed with $(X,Y)$ such that 
$\pr[Z\not=f(X,Y)] \le \frac12 - \eps$. Then
\[ \Twyn(XZ;YZ) \ge \frac\eps{2(1-\eps)} \log \frac{\eps}{\disc} - 4.\]
\end{theorem}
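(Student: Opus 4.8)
The plan is to lower-bound $\Twyn(XZ;YZ)$ by lower-bounding $I(XZ;Q\mid YZ)+I(YZ;Q\mid XZ)$ over all $Q$ with $XZ-Q-YZ$. Since $Z$ is binary, the first move is to reduce to controlling $I(XY;Q)$ for a $Q$ that additionally determines $Z$. Concretely, given any $Q$ realizing the infimum in $\Twyn(XZ;YZ)$, one replaces it by $Q'=(Q,Z)$; because $Z$ is one bit this costs at most $O(1)$ in each coordinate, the Markov chain $X-Q'-Y$ holds (it follows from $XZ-Q-YZ$ together with $Z$ being a function of $Q'$), and $H(Z\mid Q')=0$. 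So it suffices to show that any $Q'$ with $X-Q'-Y$ and $H(Z\mid Q')=0$ has $I(XY;Q')\gtrsim \frac{\eps}{2(1-\eps)}\log\frac{\eps}{\disc}$, up to additive constants; and by \Lemmaref{Ibound} (the lemma flagged as being of independent interest) this in turn will follow once we show that for most $q$ the support of $\prob_{XY\mid Q'=q}$ carries only a small $\prob_{XY}$-mass.

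The heart of the argument is the discrepancy estimate for an individual $q$. Fix $q$; then $\prob_{XY\mid Q'=q}$ is a product distribution $\prob_{X\mid q}\otimes\prob_{Y\mid q}$ (by the Markov chain) and $Z$ takes a single value, say $z_q$, on this conditional. The ``advantage'' contributed by $q$ is $\pr[Z=f(X,Y)\mid Q'=q]-\tfrac12$, and since $Z\equiv z_q$ here this is $\bigl|\pr[f(X,Y)=z_q\mid Q'=q]-\tfrac12\bigr|$, i.e.\ (up to sign) $\tfrac12\bigl(\pr[f=0\mid q]-\pr[f=1\mid q]\bigr)$. Now comes the slicing: $\prob_{X\mid q}\otimes\prob_{Y\mid q}$ is a product distribution but not uniform on a rectangle, so I would decompose it as a convex combination of distributions of the form $\prob_{XY\mid (X,Y)\in r}$ for rectangles $r$ (this is exactly the ``slicing'' the introduction describes, e.g.\ via level sets of $\prob_{X\mid q}$ and $\prob_{Y\mid q}$, or a standard layer-cake / coupling decomposition into product-uniform pieces). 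For each such rectangle-slice $r$, the signed mass $\bigl|\pr[(X,Y)\in r\wedge f=0]-\pr[(X,Y)\in r\wedge f=1]\bigr|$ is at most $\disc$ by definition of discrepancy; so a rectangle $r$ with $\prob_{XY}(r)$ large can only contribute advantage $\le \disc/\prob_{XY}(r)$ relative to its own mass, which is tiny unless $\prob_{XY}(r)\lesssim \disc/\eps$ or so. Rather than carry the slice index inside $Q$ (which would blow up the mutual information, as the introduction warns), I would add a single bit $B$ to $Q$ — call a slice ``small'' if $\prob_{XY}(r)\le \tau$ for a threshold $\tau$ to be chosen around $\eps/\disc$-ish scale, ``large'' otherwise — and argue that the large slices collectively contribute negligible advantage, hence the small-slice event $B=\text{small}$ carries advantage $\ge$ roughly $\eps - (\text{negligible})$, and therefore probability $\ge$ (advantage), while lying inside a union of rectangles of total $\prob_{XY}$-mass that is still controlled (the point being that even though there can be many small rectangles, their union still has bounded mass because each value of $q$ contributes boundedly and the small-rectangle mass per $q$ is at most $1$; a more careful accounting via the advantage forces the effective mass to be $\lesssim \tau/\eps$ times something).

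Then I set $Q''=(Q',B)$, note $H(Z\mid Q'')=0$ and $X-Q''-Y$ still (adding a function of existing variables preserves both), and apply \Lemmaref{Ibound}: on the event $B=\text{small}$, which has probability $\Theta(\eps)$, the conditional support has mass $\le \tau$, so $I(XY;Q'')\ge \Theta(\eps)\cdot \log(1/\tau) - O(1)$; optimizing $\tau\asymp \disc/\eps$ gives $I(XY;Q'')\ge \frac{\eps}{2(1-\eps)}\log\frac{\eps}{\disc}-O(1)$, and since $I(XY;Q'')$ differs from $I(XZ;Q\mid YZ)+I(YZ;Q\mid XZ)$ by only $O(1)$ (one bit of $Z$, one bit of $B$), the claimed bound on $\Twyn(XZ;YZ)$ follows, with the $-4$ absorbing these constants. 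The main obstacle I anticipate is the bookkeeping in the previous paragraph: making ``the small rectangles collected together still have small total mass'' precise, since a priori the union of small rectangles over all $q$ and all slices could be large — the key is that what we actually need is not smallness of the union as a set, but that conditioned on $B=\text{small}$ the distribution $\prob_{XY\mid Q''}$ is supported on a set of $\prob_{XY}$-mass $\le\tau$ for each value of $Q''$, which holds by construction of the slices; the delicate point is instead ensuring that $\pr[B=\text{small}]$ is genuinely $\Omega(\eps)$, i.e.\ that the large slices cannot by themselves account for the $\tfrac12-\eps$ advantage, and this is where the per-rectangle discrepancy bound $\disc$ and the choice of threshold $\tau$ must be balanced carefully (and where the $\frac{1}{1-\eps}$ factor, as opposed to a cleaner constant, will come from).
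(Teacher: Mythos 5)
Your overall strategy coincides with the paper's: reduce to lower-bounding $I(XY;Q)$ for a $Q$ satisfying $X-Q-Y$ and $H(Z|Q)=0$ (both already follow from $XZ-Q-YZ$, so augmenting $Q$ by $Z$ is harmless but unnecessary), slice each conditional product distribution $\prob_{XY|Q=q}$ into rectangle-restricted pieces, record with a single extra bit whether the piece is a ``small'' or ``large'' rectangle, bound the advantage of large rectangles by $\disc/\alpha$, and conclude with \Lemmaref{Ibound}.

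There is, however, a genuine gap at the decisive step. To apply \Lemmaref{Ibound} with $T=(Q,\RR)$ you need, for \emph{each} value $(q,\text{small})$, that the support of $\prob_{XY\mid Q=q,\RR=\text{small}}$ has small $\prob_{XY}$-mass. That support is the union $\cL_q$ of \emph{all} small rectangles occurring with $q$, not a single slice; your assertion that this mass bound ``holds by construction of the slices'' is precisely what a generic layer-cake or coupling decomposition does not give you --- there may be many small slices for a given $q$, their union can have large mass, and you cannot put the slice index into $Q$ without destroying the $O(1)$ accounting (as you note yourself). The paper's \Claimref{decomposeQ} resolves this with a specific decomposition: for each $q$, the rectangles receiving positive probability form a nested staircase $\{x_1,\dots,x_i\}\times\{y_1,\dots,y_j\}$ obtained by sorting according to the likelihood ratios $\pr[X=x,Q=q]/(\pr[X=x]\pr[Q=q])$ (level sets of this ratio, rather than of $\prob_{X|Q=q}$, which is what makes each piece equal to $\prob_{XY}$ restricted to the rectangle even when $\prob_{XY}$ is not uniform), and then exploits the product structure of $\prob_{XY}$ to prove $\pr[(X,Y)\in\cL_q]\le 2\sqrt{\alpha}$: any small staircase rectangle must have a side of marginal mass at most $\sqrt{\alpha}$, so $\cL_q$ is covered by a column-set and a row-set each of mass at most $\sqrt{\alpha}$. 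This square-root loss is not cosmetic --- it is exactly where the factor $\tfrac12$ in $\frac{\eps}{2(1-\eps)}\log\frac{\eps}{\disc}$ comes from; indeed, if your supporting claim were true as stated, you would obtain the bound without the $\tfrac12$, which is a sign the claim is too strong. Without an argument bounding $\pr[(X,Y)\in\cL_q]$, the invocation of \Lemmaref{Ibound} is unjustified and the proof does not go through.
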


\begin{proof}
We seek to lower-bound the tension, $\Twyn(XZ;YZ) = \inf_{Q: XZ-Q-YZ}
I(XZ;Q|YZ)+I(YZ;Q|XZ)$.  Consider a random variable $Q$ over an alphabet
\cQ, jointly distributed with $(X,Y)$, such that $XZ-Q-YZ$. Firstly, note
that this implies $H(Z|Q)=0$, and $I(X;Y|Q)=0$ (since both these quantities
are upper-bounded by $I(XZ;YZ|Q) = 0$).  To lower-bound
$I(XZ;Q|YZ)+I(YZ;Q|XZ)$, it is enough to lower-bound
$I(XY;Q)$, as shown below:
\begin{align*}
I(XZ;Q|YZ)+I(YZ;Q|XZ) 
&= I(X;Q|YZ) + I(Y;Q|XZ)\\
&= I(XZ;Q|Y) - I(Z;Q|Y) + I(YZ;Q|X) - I(Z;Q|X)\\
&\ge I(X;Q|Y) - 1 +I(Y;Q|X) - 1\\
&= (I(X;Q|Y)+ I(Y;Q|X) + I(X;Y)) - I(X;Y) - 2\\
&= (I(XY;Q) + I(X;Y|Q)) - I(X;Y) - 2\\
&= I(XY;Q) - I(X;Y)-2,
\end{align*}
where in the last step we used the fact that $I(X;Y|Q)=0$. Since we are given that $X$ and $Y$ are independent, we have
$I(XZ;Q|YZ)+I(YZ;Q|XZ) \ge I(XY;Q)-2$.

For all $q\in\cQ$, let $\Adv{q} = | \pr[f(X,Y)=0|Q=q] - \pr[f(X,Y)=1|Q=q]|$.
\begin{align*}
2\eps 
&\le \pr[Z=f(X,Y)]-\pr[Z\not=f(X,Y)] \\
&= \sum_{q\in\cQ} \pr[Q=q] \left(\pr[Z=f(X,Y)|Q=q]-\pr[Z\not=f(X,Y)|Q=q]\right) \\
&\le  \sum_{q\in\cQ} \pr[Q=q] \Adv{q},
\end{align*}
where in the last step we used the fact that $H(Z|Q)=0$.

We shall define an auxiliary random variable $R$ over all rectangles (i.e.,
with alphabet $\cR=\{ \cX'\times\cY' : \cX'\subseteq \cX, \cY' \subseteq \cY \}$),
jointly distributed with $(X,Y,Q)$, satisfying that the following conditions
for each $q\in\cQ$. Below, let $\cR_0 \subseteq \cR$ denote the set of ``small'' rectangles: i.e.,
$\cR_0 = \{ r \in \cR : \pr[(X,Y)\in r]< \alpha \}$, where $\alpha$ is a
parameter to be set later.
Also, for 
$q\in\cQ$, let $\cL_q\subseteq\cX\times\cY$ denote the set of all $(x,y)$ which
lie in the small rectangles that occur with $q$; i.e., 
\[ \cL_q = {\displaystyle \bigcup_{\substack{r\in\cR_0:\\\pr[Q=q,R=r]>0}}} r.\]

\begin{claim}
\label{clm:decomposeQ}
There exists a random variable $R$ with alphabet \cR, 
jointly distributed with $(X,Y,Q)$ such that for each
$q\in\cQ$ the following hold.
\begin{itemize}
\item For every $r\in\cR$ such that $\pr[Q=q,R=r]>0$, the distribution
$\prob_{XY|Q=q,R=r}$ is the same as $\prob_{XY|(X,Y)\in r}$ (i.e., $\prob_{XY}$
restricted to the rectangle $r$).
\item  $\pr[(X,Y) \in \cL_q] \le 2\sqrt{\alpha}$.
\end{itemize}
\end{claim}
We prove this claim in \Appendixref{proofs}.

Let $\RR$ be a boolean random variable such that $\RR=0$ iff
$R\in\cR_0$, and $\RR=1$ otherwise. Let $\QQ=(Q,\RR)$. Note that $I(XY;Q)
\ge I(XY;\QQ)-1$; so it is sufficient to lower-bound $I(XY;\QQ)$.

First, we lower-bound $\pr[\RR=0]$, relying on the upper bound on
discrepancy. Let $\Adv{q,r} = | \pr[f(X,Y)=0|Q=q,R=r] - \pr[f(X,Y)=1|Q=q,R=r]|$.
Then $\Adv{q} \le \sum_r \pr[R=r|Q=q]\Adv{q,r}$. Further,
\begin{align*}
\pr[(X,Y)\in r] \cdot \Adv{q,r}
&=
\pr[(X,Y)\in r] \cdot  | \pr[f(X,Y)=0 | (X,Y)\in r] - \pr[f(X,Y)=1 | (X,Y)\in r]| 
	\\&\qquad\qquad\text{\hfill since } \prob_{XY|Q=q,R=r}\equiv \prob_{XY|(X,Y)\in r}\\
&= |\pr[f(X,Y)=0 \land (X,Y)\in r] - \pr[f(X,Y)=1 \land (X,Y)\in r]| \\
&\le \Discrepancy{f}{XY} \le \disc.
\end{align*}
Then,
since $\pr[(X,Y)\in r]\ge\alpha$ for $r\not\in\cR_0$, we conclude
that $\Adv{q,r} \le \frac\disc\alpha$, for $r\not\in\cR_0$. 
Now,
\begin{align*}
2\eps 
&\le \sum_{q\in\cQ} \pr[Q=q]\Adv{q} \le \sum_{q,r\in\cR} \pr[Q=q,R=r]\Adv{q,r} \\
&\le \sum_{q,r\in\cR_0}\pr[Q=q,R=r] + \sum_{q,r\not\in\cR_0} \pr[Q=q,R=r] \Adv{q,r} \\
&\le \sum_{q,r\in\cR_0}\pr[Q=q,R=r] + \frac\disc\alpha \sum_{q,r\not\in\cR_0} \pr[Q=q,R=r] \\
&\le \pr[\RR=0] + \frac\disc\alpha(1-\pr[\RR=0]).
\end{align*}
So, $\pr[\RR=0]\ge\frac{2\eps-\frac\disc\alpha}{1-\frac\disc\alpha}$.

Finally, we use the following lemma, proven in \Appendixref{proofs} (with
$S=(X,Y)$, $T=\QQ$ and $\cT_0 = \cQ\times\{0\}$) to obtain our lower bound
on $I(XY;\QQ)$.
\begin{lemma}
\label{lem:Ibound}
Let $S,T$ be jointly distributed random variables over $\cS\times\cT$, 
and $\cT_0\subseteq\cT$ be such that $\forall t\in\cT_0$, $\pr[S\in\cS_t] \le
\delta$ where $\cS_t = \{ s\in\cS : \pr[S=s|T=t]>0 \}$, and $\pr[T\in\cT_0] \ge \varepsilon$.
Then, $I(S;T) \ge \varepsilon \log \frac1\delta$.
\end{lemma}
We apply this lemma with $\delta=2\sqrt{\alpha}$ and
$\varepsilon=\frac{2\eps-\frac\disc\alpha}{1-\frac\disc\alpha}$. This yields 
$I(XY;\QQ) \ge\frac{2\eps-\frac\disc\alpha}{1-\frac\disc\alpha}(\frac12 \log \frac1\alpha - 1)$.
As described above, this bound on $I(XY;\QQ)$ yields the following bound on tension:
\begin{align}
\label{eq:discr-bound}
\Twyn(XZ;YZ) &\ge 
\frac{2\eps-\frac\disc\alpha}{1-\frac\disc\alpha}(\frac12 \log \frac1\alpha - 1) - 3.
\end{align}
To complete the proof, we set $\alpha=\frac\disc\eps$, and note that since
$\eps<\frac12$, we have $\eps/(1-\eps) < 1$.
\end{proof}

\paragraph{Remark:}
Often \disc is a quantity that vanishes as a size parameter of the
inputs grows (e.g., when $f$ is the inner-product function). When
$\eps\cdot\log\frac\eps\disc=\omega(1)$, one can obtain a tighter bound from
the above proof, by setting $\alpha=\left(\frac{\disc}\eps\right)^{1-\beta}$
for a small enough $\beta>0$.  This gives $\Twyn(XZ;YZ) \ge
\eps\cdot\log\frac\eps\disc\cdot(1-o(1))$.

\section*{Acknowledgments}
We gratefully acknowledge Mark Braverman, Prahladh Harsha and Rahul Jain for
helpful discussions and pointers.
\bibliographystyle{alpha}
\bibliography{bib}

\appendix
\section{On The Nature of Tension Region}
\label{app:tension}

In this appendix we present a gentle introduction to the notion of 
tension region, as developed in~\cite{PrabhakaranPr14}. We refer the
interested readers to \cite{PrabhakaranPr14} for more details.

Consider the random variables $X=(X',Q)$ and $Y=(Y',Q)$ where $X',Y',Q$ are
independent. In this case, it is natural to consider $Q$ as the common random variable of
$X$ and $Y$ and $H(Q)$ as a natural measure of ``common information.'' $Q$ is
determined both by $X$ and by $Y$ individually. Moreover, conditioned on
$Q$, $X$ and $Y$ are independent, i.e., $X-Q-Y$ is a Markov chain.  One
could extend this to arbitrary $X,Y$, in a couple of natural ways. The
approach of G\'{a}cs and K\"{o}rner~\cite{GacsKo73}%
is to find the ``largest'' random variable $Q$ (largness being measured in
terms of entropy) such that it is determined by $X$ alone as well as by $Y$
alone (with probability 1):
\begin{align*}
\CIgk(X;Y) &= \max_{\substack{\prob_{Q|XY}:\\H(Q|X)=H(Q|Y)=0}} H(Q)\notag\\
&= I(X;Y) - \min_{\substack{\prob_{Q|XY}:\\H(Q|X)=H(Q|Y)=0}}
I(X;Y|Q).%\label{eq:introGKCI}
\end{align*}
Clearly $\CIgk(X;Y)\leq I(X;Y)$ and, in general, this inequality maybe strict, i.e., common
information, in general, does not account for all the dependence between
$X$ and $Y$.

Wyner gave a different generalization~\cite{Wyner75}
where he defined common information in terms of the ``smallest'' random variable $Q$ (smallness 
being measured in terms of $I(XY;Q)$) so that $X$ and $Y$ are independent conditioned on $Q$ .
\begin{align*}
\CIwyn(X;Y) &= \min_{\substack{\prob_{Q|XY}:\\X-Q-Y}} I(XY;Q) \notag\\
&= I(X;Y) + \min_{\substack{\prob_{Q|XY}:\\X-Q-Y}} (I(Y;Q|X)+I(X;Q|Y)). %\label{eq:introWynCI}
\end{align*}
Now $\CIwyn(X;Y)\geq I(X;Y)$.
When $X,Y$
are of the form $X=(X',Q)$ and $Y=(Y',Q)$, where $X',Y',Q$ are independent,
then there indeed is a unique interpretation of common information (when
$\CIgk(X;Y)=\CIwyn(X;Y)=H(Q)$). Between these
extremes represented by these two measures, there are several ways
in which one could define a random variable to capture the dependence 
between $X$ and $Y$.

\begin{definition}
For a pair of correlated random variables $(X,Y)$, and
$\prob_{Q|XY}$, we say $Q$ {\em perfectly resolves} $(X,Y)$ if $I(X;Y|Q)=0$ and
$H(Q|X)=H(Q|Y)=0$. We say $(X,Y)$ is {\em perfectly resolvable}
if there exists $\prob_{Q|XY}$ such that $Q$ perfectly resolves $(X,Y)$.
\end{definition}
If $(X,Y)$ is perfectly resolvable, then $\CIgk(X;Y)=I(X;Y)=\CIwyn(X;Y)$
represents the entire mutual information between them. Tension region \Rtens XY can be thought of as measuring the extent to which a pair of random variables $(X,Y)$ is {\em not} resolvable.

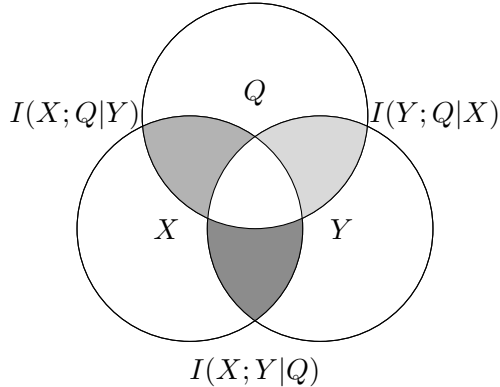
\begin{figure}[htb]
\centering

\def\firstcircle{(210:1cm) circle (1.5cm)}
\def\secondcircle{(90:1cm) circle (1.5cm)}
\def\thirdcircle{(330:1cm) circle (1.5cm)}

\begin{tikzpicture}
    \draw \firstcircle node[left] {$X$};
    \draw \secondcircle node [above] {$Q$};
    \draw \thirdcircle node [right] {$Y$};

    \begin{scope}
      \clip \firstcircle;
      \fill[gray!60] \secondcircle;
    \end{scope}
    \begin{scope}
      \clip \thirdcircle;
      \fill[gray!30] \secondcircle;
    \end{scope}
    \begin{scope}
      \clip \firstcircle;
      \fill[gray!90] \thirdcircle;
    \end{scope}

    \begin{scope}
      \clip \firstcircle;
      \clip \secondcircle;
      \fill[white] \thirdcircle;
    \end{scope}

\node at (23:2.6cm) {$I(Y;Q|X)$};
\node at (157:2.6cm) {$I(X;Q|Y)$};
\node at (270:2.4cm) {$I(X;Y|Q)$};

    \draw \firstcircle; 
    \draw \secondcircle; 
    \draw \thirdcircle;

\end{tikzpicture}

\caption{A Venn diagram representation of the three coordinates $\big(I(Y;Q|X),I(X;Q|Y),I(X;Y|Q)\big)$ in the definition of $\Rtens XYQ$. Figure taken from~\cite{PrabhakaranPr14}.}
\label{fig:venn}
\end{figure}
Recall the definition of {\em tension region} \Rtens AB of a pair of random variables $A,B$:
\begin{align*}
\Rtens AB = \{ (r_1,r_2,r_3):&\; \exists Q \text{ jointly distributed with } A,B \\
          &\text{ s.t. } I(B;Q|A) \le r_1, I(A;Q|B) \le r_2, I(A;B|Q) \le r_3 \}.
\end{align*}
It follows from Fenchel-Eggleston's strengthening of Carath\'{e}odory's
theorem~\cite[pg. 310]{CsiszarKo81}, that we can restrict ourselves to
$\prob_{Q|XY}$ with alphabet \cQ such that $|\cQ|\leq |\cX||\cY|+2$.

\begin{figure}[htb]
\centering
\scalebox{0.6}{\includegraphics{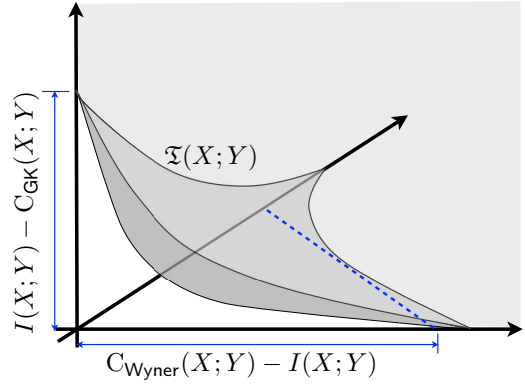}}
\caption{A schematic representation of the region \Rtens XY.
\Rtens XY is an unbounded, convex region, bounded away from the origin
(unless $(X,Y)$ is perfectly resolvable). Relationship between two
points on the boundary of \Rtens XY and the quantities
\CIgk(X;Y) and \CIwyn(X;Y) 
(The dotted line is at 45$^\circ$
to the axes.)  Figure taken from~\cite{PrabhakaranPr14}.}
\label{fig:Rtens}
\end{figure}
It can be shown that \Rtens XY includes the origin 
if and only if the pair $(X,Y)$
is perfectly resolvable. When this is not the case, it is important to consider all three coordinates of
 together to identify the unresolvable nature of a pair $(X,Y)$,
because since  \Rtens XY does intersect each of the three
axes, or in other words, any two coordinates of can be made
simultaneously 0 by choosing an appropriate $Q$.

Below we summarize several useful properties of \Rtens XY. For interpretations of \Rtens XY in terms of certain information theoretic problems, we refer the reader to~\cite{PrabhakaranPr14}.

\subsection{Some Properties of Tension}

\paragraph{Monotonicity of \Rtens XY.} Wolf and Wullschleger~\cite{WolfWu05} showed that the three axes incercepts have a certain
``monotonicity'' property (they can only decrease, as $X,Y$ evolve as the
views of two parties in a protocol).  In fact, this
monotinicity  is a consequence of the
monotinicity of the entire region \Rtens XY stated in \Propositionref{monotonicity}.

\paragraph{Tensorization of \Rtens XY.}
If $(X_1,Y_1)$ is independent of $(X_2,Y_2)$,then 
\[\Rtens {(X_1X_2)} {(Y_1Y_2)} = \Rtens {X_1} {Y_1} + \Rtens {X_2} {Y_2}.\]

\paragraph{Convexity, closedness, and continuity of \Rtens XY.}
Firstly, the region of tension is closed and convex.
Secondly, the region of tension is {\em continuous} in the sense
that when the joint p.m.f.
$p_{X,Y}$ is close to the joint p.m.f. $p_{X',Y'}$, the tension
regions $\Rtens {X}{Y}$ and $\Rtens {X'}{Y'}$ are also close. 
Specifically, if $\SDiff(XY,X'Y')\le \epsilon$, then
$\Rtens {X}{Y} \subseteq \Rtens {X'}{Y'} - \delta(\epsilon)$,
where $\delta(\epsilon) = 2H_2(\epsilon) + \epsilon\log\max\{|\cX|,|\cY|\}$.

\section{Proof of \Lemmaref{Ibound} and \Claimref{decomposeQ}.} 
\label{app:proofs}

To
complete the proof of \Theoremref{disc-tens} we need to prove
\Lemmaref{Ibound} and \Claimref{decomposeQ}. We do this below.

\begin{proof}[Proof of \Lemmaref{Ibound}]
We have
\begin{align*}
I(S;T) &= \sum_{(s,t)\in\cS\times\cT} \prob_{S,T}(s,t)\log\frac{\prob_{S,T}(s,t)}{\prob_S(s)\prob_T(t)}\\
 &= \sum_{t\in\cT} \prob_T(t) \sum_{s\in\cS_t} \prob_{S|T}(s|t) \log\frac{\prob_{S|T}(s|t)}{\prob_S(s)}\\
 &= \sum_{t\in\cT_0} \prob_T(t) \sum_{s\in\cS_t} \prob_{S|T}(s|t) \log\frac{\prob_{S|T}(s|t)}{\prob_S(s)} +
\sum_{t\in\cT-\cT_0} \prob_T(t) \sum_{s\in\cS_t} \prob_{S|T}(s|t) \log\frac{\prob_{S|T}(s|t)}{\prob_S(s)}
\end{align*}
Notice that, for each $t$
\[ \sum_{s\in\cS_t} \prob_{S|T}(s|t) \log\frac{\prob_{S|T}(s|t)}{\prob_S(s)} = D(\prob_{S|T=t}\|\prob_S) \geq 0.\]
Hence, we have
\[I(S;T) \geq \sum_{t\in\cT_0} \prob_T(t) \sum_{s\in\cS_t} \prob_{S|T}(s|t) \log\frac{\prob_{S|T}(s|t)}{\prob_S(s)}.\]
For each $t\in\cT_0$, let $p_t=\pr[S\in\cS_t]=\sum_{s\in\cS_t} \prob_S(s)$, and let us define over $\cS_t$ the probability mass function, $\prob_{(t)}(s)=\frac{\prob_S(s)}{p_t}$, $s\in\cS_t$. Note that $p_t\le\delta$. Then, for $t\in \cT_0$,
\begin{align*}
\sum_{s\in\cS_t} \prob_{S|T}(s|t) \log\frac{\prob_{S|T}(s|t)}{\prob_S(s)}
&= \sum_{s\in\cS_t} \prob_{S|T}(s|t) \log\frac{\prob_{S|T}(s|t)}{\prob_S(s)/p_t}\frac{1}{p_t}\\
&= D(\prob_{S|T}||\prob_{(t)}) + \log\frac{1}{p_t}\\
&\ge \log\frac{1}{\delta}.
\end{align*}
Subtituting this back,
\begin{align*}
I(S;T) \geq \sum_{t\in\cT_0} \prob_T(t)\log\frac{1}{\delta} \geq \varepsilon \log\frac{1}{\delta}.
\end{align*}
\end{proof}

\begin{proof}[Proof of \Claimref{decomposeQ}]
It remains to describe the distribution $\prob_{R|XYQ}$ so that the
conditions listed in \Claimref{decomposeQ} hold.

For $r=\cX_r\times\cY_r\in\cR$, we let  
\begin{align*}
\sigma_{q,r} &= \min_{x\in \cX_r}\frac{\pr[X=x,Q=q]}{\pr[X=x]\pr[Q=q]}-\max_{x'\not\in \cX_r} \frac{\pr[X=x',Q=q]}{\pr[X=x']\pr[Q=q]}\\
\tau_{q,r} &= \min_{y\in \cY_r}\frac{\pr[Y=y,Q=q]}{\pr[Y=y]\pr[Q=q]}-\max_{y'\not\in \cY_r} \frac{\pr[Y=y',Q=q]}{\pr[Y=y']\pr[Q=q]}
\end{align*}
Above, in defining $\max_{x'\not\in \cX_r}$, if no such $x'$ exists -- i.e.,
$\cX_r=\cX$ -- we take the maximum to be 0 (and similarly for $\max_{y'\not\in \cY_r}$).
Now we define $\prob_{R|XYQ}$ as follows:
\begin{align*}
\pr[R=r|X=x,Y=y,Q=q] &= 
\begin{cases}
\sigma_{q,r}\cdot\tau_{q,r}\cdot\frac{\pr[X=x,Y=y]}{\pr[X=x,Y=y|Q=q]}
	& \text{ if } \sigma_{q,r}>0, \tau_{q,r}>0 \text{ and } (x,y)\in r \\
0	& \text{ otherwise.}
\end{cases} \\
\end{align*}

An alternate way to describe the mass assigned to $r$ is as follows. Let
$\cX_q\times\cY_q$ be the support of $\prob_{XY|Q=q}$. Let $\cX_q = \{
x_1,\cdots,x_M \}$, such that $\frac{\pr[X=x_i,Q=q]}{\pr[X=x_i]\pr[Q=q]} \ge
\frac{\pr[X=x_{i+1},Q=q]}{\pr[X=x_{i+1}]\pr[Q=q]}$ for all $i\in[1,M-1]$.
For notational convenience, we also define a dummy $x_{M+1}$ with
$\frac{\pr[X=x_{M+1},Q=q]}{\pr[X=x_{M+1}]\pr[Q=q]}=0$. 
Define $y_1,\cdots,y_N,y_{N+1}$ similarly, where $N=|\cY_q|$. 
Then, the only rectangles $r$ for which $\pr[R=r|Q=q]$ can be positive are of the form
$r_{ij} = \cX_i \times \cY_j$ for $(i,j)\in[M]\times[N]$, where $\cX_i = \{x_1,\cdots,x_i\}$,
$\cY_j=\{y_1,\cdots,y_j\}$, 
$\frac{\pr[X=x_i,Q=q]}{\pr[X=x_i]\pr[Q=q]} > \frac{\pr[X=x_{i+1},Q=q]}{\pr[X=x_{i+1}]\pr[Q=q]}$, and
$\frac{\pr[Y=y_j,Q=q]}{\pr[Y=y_j]\pr[Q=q]} > \frac{\pr[Y=y_{j+1},Q=q]}{\pr[Y=y_{j+1}]\pr[Q=q]}$.

First, we verify that $\prob_{R|Q=q,X=x,Y=y}$ is indeed a valid probability distribution.
\begin{align*}
&\sum_{r\in\cR} \pr[R=r|Q=q,X=x_{i^*},Y=y_{i^*}] \\
&\qquad\qquad =	\sum_{r: (x_{i^*},y_{i^*})\in r}
		\sigma_{q,r}\cdot\tau_{q,r}\cdot\frac{\pr[X=x_{i^*},Y=y_{i^*}]}{\pr[X=x_{i^*},Y=y_{i^*}|Q=q]} \\
&\qquad\qquad =	\frac{\pr[X=x_{i^*},Y=y_{i^*}]}{\pr[X=x_{i^*},Y=y_{i^*}|Q=q]} \cdot \sum_{i=i^*}^M \sum_{j=j^*}^N
		\sigma_{q,r_{ij}}\cdot\tau_{q,r_{ij}} \\
&\qquad\qquad =	\frac{\pr[X=x_{i^*},Y=y_{i^*}]}{\pr[X=x_{i^*},Y=y_{i^*}|Q=q]} \cdot 
	\sum_{i=i^*}^M 
	\left(\frac{\pr[X=x_i,Q=q]}{\pr[X=x_i]\pr[Q=q]} - \frac{\pr[X=x_{i+1},Q=q]}{\pr[X=x_{i+1}]\pr[Q=q]}\right)
		\\ &\qquad\qquad\qquad\qquad\cdot 
	\sum_{j=j^*}^N
	\left(\frac{\pr[Y=y_j,Q=q]}{\pr[Y=y_j]\pr[Q=q]} - \frac{\pr[Y=y_{j+1},Q=q]}{\pr[Y=y_{j+1}]\pr[Q=q]}\right) \\
&\qquad\qquad =	\frac{\pr[X=x_{i^*},Y=y_{i^*}]}{\pr[X=x_{i^*},Y=y_{i^*}|Q=q]} \cdot 
		\frac{\pr[X=x_{i^*},Q=q]}{\pr[X=x_{i^*}]\pr[Q=q]} \cdot
		\frac{\pr[Y=y_{j^*},Q=q]}{\pr[Y=y_{j^*}]\pr[Q=q]} = 1,
\end{align*}
where in the last step we used the facts that $X,Y$ are independent and also they are conditionally indepdendent conditioned on $Q$. 

Next, we verify that $\prob_{XY|Q=q,R=r} \equiv \prob_{XY|(X,Y)\in r}$. 
Firstly, if $(x,y)\not\in r$, then $\pr[R=r|X=x,Y=y,Q=q]=0$, and hence
$\pr[X=x,Y=y|Q=q,R=r] = 0$ (and also, $\pr[X=x,Y=y|(X,Y)\in r]=0$). Now,
suppose $(x,y)\in r$. Then,
\begin{align*}
\pr[X=x,Y=y|Q=q,R=r] 
&= \frac{ \pr[R=r|X=x,Y=y,Q=q] \pr[X=x,Y=y|Q=q] }{ \pr[R=r|Q=q] } \\
&= \frac{\sigma_{q,r}\cdot\tau_{q,r}\cdot\pr[X=x,Y=y]}{\pr[R=r|Q=q]} = \frac{\pr[X=x,Y=y]}{F(q,r)},
\end{align*}
where $F(q,r)$ is a quantity independent of $(x,y)$. Since
$\pr[X=x,Y=y|Q=q,R=r]$ is a probability distribution, $F(q,r)=
\sum_{(x,y)\in r} \pr[X=x,Y=y] = \pr[(X,Y)\in r]$.
Thus indeed, 
$\pr[X=x,Y=y|Q=q,R=r] = \pr[X=x,Y=y|(X,Y)\in r]$.

Finally, we argue that $\pr[(X,Y) \in \cL_q] \le 2\sqrt{\alpha}$.
Consider any $q\in\cQ$, and as before,
let $\cX_q = \{x_1,\cdots,x_M\}$, $\cY_q=\{y_1,\cdots,y_N\}$ sorted
appropriately, and, for $i\in[M],j\in [N]$, $r_{ij} =
\{x_1,\cdots,x_i\}\times\{y_1,\cdots,y_j\}$.
Then $(x,y)\in\cL_q$ iff $(x,y)\in r_{ij}$ for some 
$r_{ij}\in\cR_0$ (i.e., $\pr[(X,Y)\in r_{ij}]\le\alpha$).  
Let $i^*$ be the maximum value in $[M]$ such that $\pr[X\in\{x_1,\cdots,x_{i^*}\}]\le\sqrt\alpha$, and
similarly, let $j^*$ be the maximum value in $[N]$ such that $\pr[Y\in\{y_1,\cdots,y_{j^*}\}]\le\sqrt\alpha$. 
Then we note that, if $i>i^*$ and $j>j^*$, then $(x_i,y_j)\not\in \cL_q$.
This is because,  $(x_i,y_j)\in r_{i'j'}$ $\implies$ $(i'\ge i >i^*,j'\ge j >j^*)$ $\implies$ $r_{i'j'} \not\in \cR_0$, as
$\pr[(X,Y)\in r_{i'j'}] =
\pr[X\in\{x_1,\cdots,x_{i'}\}]\cdot\pr[Y\in\{y_1,\cdots,y_{j'}\}] > \sqrt{\alpha}\sqrt{\alpha}$ (by
definition of $i^*$ and $j^*$). Hence,
\begin{align*}
\pr[(X,Y)\in\cL_q]
&\le\pr[(X\in\{x_1,\cdots,x_{i*}\} \lor (Y\in\{y_1,\cdots,y_{j*}\}]\\
&\le\pr[X\in\{x_1,\cdots,x_{i*}\}]+\pr[Y\in\{y_1,\cdots,y_{j*}\}] \le 2\sqrt\alpha.
\end{align*}
\end{proof}

\end{document}